\let\classAND\AND
\let\AND\relax
\let\AND\classAND
\newenvironment{proof}{\textbf{Proof:}}{\hfill  $\blacksquare$}
\renewcommand\arraystretch{0.75}
\begin{document}

\begin{frontmatter}
%\runtitle{Insert a suggested running title}  % Running title for regular 
                                              % papers but only if the title  
                                              % is over 5 words. Running title 
                                              % is not shown in output.

\title{Asynchronous Approximate Byzantine Consensus: \\ A Multi-hop Relay Method and Tight Graph Conditions\thanksref{footnoteinfo}} % Title, preferably not more 
% than 10 words.

\thanks[footnoteinfo]{This work was supported in the part by JSPS under Grants-in-Aid for 
	Scientific Research Grant No.~18H01460 and 22H01508. The support provided by the China
	Scholarship Council is also acknowledged.
    The material in this paper was partially presented at the American Control Conference (ACC2022), June 8--10, 2022, Atlanta, GA, USA.}

\author[Changsha]{Liwei Yuan}\ead{yuanliwei@hnu.edu.cn},    % Add the 
\author[Tokyo]{Hideaki Ishii}\ead{ishii@c.titech.ac.jp}               % e-mail address 

\address[Changsha]{College of Electrical and Information Engineering, Hunan University, Changsha 410082, China}
\address[Tokyo]{Department of Computer Science, Tokyo Institute of Technology, Yokohama 226-8502, Japan}

\begin{keyword}                           % Five to ten keywords,  
Byzantine consensus; Asynchronous distributed algorithms; Cyber-security; Multi-hop communication.           % chosen from the IFAC 
\end{keyword}                             % keyword list or with the 
                                          % help of the Automatica 
                                          % keyword wizard

\begin{abstract}                          % Abstract of not more than 200 words.
We study a multi-agent resilient consensus problem, where some agents are of the Byzantine type and try to prevent the normal ones from reaching consensus.
In our setting, normal agents communicate with each other asynchronously over multi-hop relay channels with delays. 
To solve this asynchronous Byzantine consensus problem,
we develop the multi-hop weighted mean subsequence reduced (MW-MSR) algorithm. 
The main contribution is that we characterize a tight graph condition for our algorithm to achieve Byzantine consensus, which is expressed in the novel notion of strictly robust graphs. We show that the multi-hop communication is effective for enhancing the network's resilience against Byzantine agents. 
As a result, we also obtain novel conditions for resilient consensus under the malicious attack model, which are tighter than those known in the literature.
Furthermore, the proposed algorithm can be viewed as a generalization of the conventional flooding-based algorithms, with less computational complexity.
Lastly, we provide numerical examples to show the effectiveness of the proposed algorithm.
\end{abstract}

\end{frontmatter}

\section{Introduction}

\begin{table*}[t]\scriptsize
	\caption{Graph conditions for resilient consensus under different adversary models and update schemes.}
	\label{table1}
	\renewcommand\arraystretch{2.0}
	%\centering
	%\vspace{0.3cm}
	\begin{tabular}{c|c|c|c} 
		\hline 
		\multicolumn{2}{c|}{}  &  
		Synchronous & 
		Asynchronous\\
		\hline
		\raisebox{-4mm}{\multirow{2}*{Malicious}} & $f$-total &
		\makecell[c]{$(f+1,f+1)$-robust  with $l$ hops$ ^\bigstar$  \\
			(\cite{leblanc2013resilient} (for $l=1$);\\   
			\cite{yuan2021resilient})} & 
		\makecell[c]{$(2f+1)$-robust with $l$ hops$^\vartriangle$ \\ 
			(\cite{dibaji2017resilient} (for $l=1$);\\ \cite{yuan2021resilient}) \\ 
			A tighter condition in Corollary~\ref{asynlocalmalicious}: \\
			$(f+1)$-strictly robust with $l$ hops$^\vartriangle$ }\\[2mm]
		\cline{2-4}
		& $f$-local & 
		\makecell[c]{$(2f+1)$-robust with $l$ hops$^\vartriangle$  \\ 
			(\cite{leblanc2013resilient} (for $l=1$);\\
			\cite{yuan2021resilient}) \\ 
			A tighter condition in Corollary~\ref{synlocalmalicious}: \\
			$(f+1)$-strictly robust with $l$ hops$^\vartriangle$} & 
		\makecell[c]{$(2f+1)$-robust with $l$ hops$^\vartriangle$ \\ 
			(\cite{dibaji2017resilient} (for $l=1$);\\ 
			\cite{yuan2021resilient}) \\ 
			A tighter condition in Corollary~\ref{asynlocalmalicious}: \\
			$(f+1)$-strictly robust with $l$ hops$^\vartriangle$ }\\[2mm]
		\cline{1-4}
		\raisebox{-2mm}{\multirow{2}*{Byzantine}} & 
		$f$-total &
		\makecell[c]{$(f+1)$-strictly robust with $l$ hops$^\bigstar$  \\ 
			(\cite{vaidya2012iterative} (for $l=1$);\\  
			\cite{su2017reaching}; This work: Proposition~\ref{syn})} & 
		\makecell[c]{$(f+1)$-strictly robust with $l$ hops$^\bigstar$ \\  
			(\cite{sakavalas2020asynchronous}; 
			This work: Theorem~\ref{asyntheorem}) }\\[2mm]
		\cline{2-4}
		& $f$-local & 
		\makecell[c]{$(f+1)$-strictly robust with $l$ hops$^\bigstar$ \\ 
			(This work: Proposition~\ref{syn})} & 
		\makecell[c]{$(f+1)$-strictly robust with $l$ hops$^\bigstar$ \\ 
			(This work: Theorem~\ref{asyntheorem})}\\[2mm]
		\hline
		%		\multicolumn{6}{c}{\makecell[l]{Note that the notions of robustness (and strict robustness) are different under the $f$-total and $f$-local models. See Section \ref{secrobustness} for details.}}
	\end{tabular}\\[1mm]
	\mbox{}Note that the notion of (strict) robustness is different under the $f$-total and $f$-local models (see Section \ref{secrobustness}). Here, $^\vartriangle$ means sufficient, while $^\bigstar$ means necessary and sufficient.
	%\vspace*{-5mm}
\end{table*}

With concerns for cyber-security sharply rising in multi-agent systems, consensus resilient in the presence of adversarial agents has gained much attention (\cite{vaidya2012iterative, leblanc2013resilient, dibaji2017resilient,mitra2019byzantine,tian2019chance,yuan2021secure}). 
The focus of this paper is resilient consensus under the Byzantine agents that behave arbitrarily, which has a rich history in distributed
computing (\cite{dolev1982byzantine, Lynch}).
\cite{dolev1986reaching} introduced the above so-called \textit{approximate Byzantine consensus} problem for the case of complete networks, where the non-adversarial nodes are required to achieve \textit{approximate agreement} by converging to a relatively small interval in finite time. \cite{vaidya2012iterative}, \cite{su2017reaching} studied the same problem for synchronous networks with general topologies; see also \cite{leblanc2013resilient}.
To solve the asynchronous version of the problem, flooding-based algorithms were proposed in \cite{abraham2004optimal,sakavalas2020asynchronous}.

In this paper, we study the asynchronous approximate Byzantine consensus problem using mean subsequence reduced (MSR) algorithms, which are often used for iterative fault-tolerant consensus algorithms (\cite{azadmanesh2002asynchronous,  vaidya2012iterative, bonomi2019approximate}).
In MSR algorithms, normal nodes discard the most deviated states from neighbors to avoid being influenced by possible extreme values from adversaries. Moreover, graph robustness is found to be a tight graph condition for the network using MSR algorithms (\cite{leblanc2013resilient,abbas2017improving, dibaji2018resilient,  wang2019resilient,lu2023distributed}).

In this context, we focus on using multi-hop relay techniques to relax the heavy graph connectivity requirement for Byzantine consensus. 
Multi-hop communication enables networks to have multiple paths for interactions among nodes (\cite{Lynch,goldsmith2005wireless}), and hence, it is effective for enhancing resilience against node failures.
In the systems and control area, there are works analyzing the stability of the networked control systems with control inputs and observer information sent over multi-hop networks (\cite{dinnocenzo2016resilient,cetinkaya2018probabilistic}).
Multi-hop techniques are also used for consensus problems in recent years (\cite{jin2006multi,zhao2016global,ding2021analysis}). 
Recently, \cite{su2017reaching} introduced multi-hop communication in MSR algorithms, and they solved the synchronous Byzantine consensus problem with a weaker condition on network structures compared to that derived under the one-hop case in \cite{vaidya2012iterative}. 
Later, \cite{sakavalas2020asynchronous} studied the asynchronous Byzantine consensus problem using a flooding-based algorithm.
However, strong assumptions were made in \cite{sakavalas2020asynchronous}. Specifically, their algorithm essentially requires each normal node to be aware of the global topology information and to ``flood'' its own value until it is relayed to reach all nodes in the network.
In \cite{yuan2021resilient}, we extended the notion of (one-hop) graph robustness to the multi-hop case and provided a tight necessary and sufficient graph condition for resilient consensus under the malicious model. Table~\ref{table1} summarizes related resilient consensus works.

The contributions of this paper are outlined as follows.
First, we study the asynchronous Byzantine consensus problem using the multi-hop weighted MSR (MW-MSR) algorithm proposed in our previous work (\cite{yuan2021resilient}) for the malicious model. As a main contribution, we prove a tight necessary and sufficient condition for our algorithm to achieve asynchronous Byzantine consensus, expressed by the novel notion of \textit{strictly robust graphs with $l$ hops.}
This condition requires more connections than the robustness notion in \cite{yuan2021resilient} since the Byzantine model is more adversarial than the malicious model.
Compared to \cite{sakavalas2020asynchronous}, our algorithm is more light-weighted and makes a weaker assumption as mentioned earlier. Moreover, the problem there is a special case of multi-hop paths with unbounded lengths in this paper. Their graph condition also coincides with ours by setting the path length to be the longest cycle-free one in the graph. Since in our model, the number of hops is limited, our approach is more distributed in the sense that we only require each normal node to know the local topology and neighbors' values up to $l$ hops away.

Most importantly, our algorithm can achieve Byzantine consensus under the $f$-local model, which is even more adversarial than the $f$-total model in \cite{su2017reaching,sakavalas2020asynchronous,yuan2021resilient}. As we show in Section 6, our algorithm can tolerate more Byzantine agents in the network than the above works for the $f$-total model.
Such an advantage is because of the flexibility of our algorithm on general $l$-hop communication.
In contrast, the flooding algorithm in \cite{sakavalas2020asynchronous} is restricted to the $f$-total model since a Byzantine node there can flood erroneous values to all the nodes in the network.
From a practical point of view, our approach offers an adjustable option for the trade-off between an appropriate level of resilience and an affordable cost of communication resources. Even in the same network with Byzantine agents,
our method generally requires less relay hops to achieve Byzantine consensus compared to \cite{sakavalas2020asynchronous}.

Lastly, this paper also provides novel insights into resilient consensus under the malicious model. 
It turns out that the robust graph conditions known in the literature can be tightened even for the one-hop synchronous $f$-local model and asynchronous $f$-local/total model. 
Moreover, we obtain a tighter sufficient graph condition for the multi-hop case in \cite{yuan2021resilient}, and we also extend the results to the $f$-local model.
These results are indicated in Table~\ref{table1}.
The key contribution for these advances is that we prove the order between the different graph conditions for the two adversary models (Proposition~\ref{threeconditions}) for general $l$-hop communication. 
Moreover, our approach can be easily extended to more complex multi-agent consensus systems, e.g., agents with second order dynamics (\cite{dibaji2017resilient}) and resilient consensus-based formation control problems.

Compared to the conference paper (\cite{yuan2022asynchronous}), 
the current paper contains the following novel contents: the analysis of the synchronous MW-MSR algorithm on the $f$-local Byzantine model, the relations between different graph conditions, a tighter condition for the resilient consensus under the malicious model, as well as novel simulations.
We also present an event-triggered scheme to our algorithm for the Byzantine consensus problem in \cite{yuan2022event}.

The rest of this paper is organized as follows. 
Section~2 outlines preliminaries and the system model. Section~3 presents the notion and properties of strictly robust graphs with $l$ hops.
In Sections 4 and 5, we derive conditions under which the MW-MSR algorithms guarantee Byzantine consensus under synchronous and asynchronous updates, respectively.
Section~6 provides numerical examples to demonstrate the efficacy of our algorithm.
Lastly, Section~7 concludes the paper.

\section{Preliminaries and Problem Setting}

In this section, we introduce the problem setting of this paper and outline our resilient consensus algorithm.

\subsection{Network Model under Multi-hop Communication}
Consider the directed graph $\mathcal{G} = (\mathcal{V},\mathcal{E})$ consisting of the node set $\mathcal{V}=\{1,...,n\}$ and the edge set $\mathcal{E}\subset \mathcal{V} \times \mathcal{V}$. Here, the edge $(j,i)\in \mathcal{E}$ indicates that node $i$ can get information from node $j$. The subgraph of $\mathcal{G} = (\mathcal{V},\mathcal{E})$ induced by the node set $\mathcal{H}\subset\mathcal{V}$ is the subgraph $\mathcal{G}_\mathcal{H}=(\mathcal{V}(\mathcal{H}),\mathcal{E}(\mathcal{H}))$, where $\mathcal{V}(\mathcal{H})=\mathcal{H}$, $\mathcal{E}(\mathcal{H})=\{(i,j)\in \mathcal{E}: i,j\in \mathcal{H}\}$.
An $l$-hop path from node $i_1$ to $i_{l+1}$ is a sequence of distinct nodes $(i_1, i_2, \dots, i_{l+1})$, where $(i_j, i_{j+1})\in \mathcal{E} $ for $j=1, \dots, l$. Node $i_{l+1}$ is reachable from node $i_1$. 
Let $\mathcal{N}_i^{l-}$ be the set of nodes that can reach node $i$ via at most $l$-hop paths.
Let $\mathcal{N}_i^{l+}$ be the set of nodes that are reachable from node $i$ via at most $l$-hop paths. 
The $l$-th power of the graph $\mathcal{G}$, denoted by $\mathcal{G}^l$, is a multigraph with $\mathcal{V}$ and a directed edge from node $j$ to node $i$ is defined by a path of length at most $l$ from $j$ to $i$ in $\mathcal{G}$. The adjacency matrix $A = [a_{ij} ]$ of $\mathcal{G}^l$ is given by $\alpha \leq a_{ij}<1$ if $j\in \mathcal{N}_i^{l-}$ and otherwise $a_{ij} = 0$, where $\alpha > 0$ is fixed and $\sum_{j=1,j\neq i}^{n} a_{ij}\leq 1$. Let $L = [b_{ij} ]$ be the Laplacian matrix of $\mathcal{G}^l$, where $b_{ii} =\sum_{j=1,j\neq i}^{n}a_{ij}$, $b_{ij} = -a_{ij}$ for $ i\neq j$.

Next, we describe our communication model, inspired by \cite{su2017reaching,yuan2021resilient}. Node $i_1$ can send messages of its own to an $l$-hop neighbor $i_{l+1}$ via different paths.
We represent a message as a tuple $m=(w,P)$, where $w=\mathrm{value}(m)\in \mathbb{R}$ is the message content and $P=\mathrm{path}(m)$ indicates the path via which message $m$ is transmitted. 
At time $k\geq 0$, each normal node $i$ exchanges the messages $m_{ij}[k]=(x_i[k],P_{ij}[k])$ consisting of its state $x_i[k]$ along each path $P_{ij}[k]$ with its multi-hop neighbor $j$ via the relaying process in \cite{yuan2021resilient}.
Denote by $\mathcal{V}(P)$ the set of nodes in $P$.

\subsection{Update Rule and Threat Models}\label{problemsetting}
Consider a time-invariant directed network $\mathcal{G} = (\mathcal{V},\mathcal{E})$.
The node set $\mathcal{V}$ is partitioned into the set of normal nodes $\mathcal{N}$ and the set of adversary nodes $\mathcal{A}$, where $n_N=|\mathcal{N}|$ and $n_A=|\mathcal{A}|$. The partition is unknown to the
normal nodes at all times.

%At each time $k$, normal node $i$ first sends a message $m_{ij}[k]=(x_i[k],P_{ij}[k])$ consisting of its state $x_i[k]$ over each path $P_{ij}[k]$ of at most $l$-hop to node $j$ in $\mathcal{N}_i^{l+}$. Then, it receives messages $m_{ji}[k]=(x_j[k],P_{ji}[k])$ from $j\in \mathcal{N}_i^{l-}$, for which the destination is $i$. Let $\mathcal{M}_i[k]$ be the set of messages that node $i$ received at time $k$. Then node $i$ uses all the values in $\mathcal{M}_i[k]$ as the input for the resilient consensus algorithm to obtain $x_i[k+1]$. 

When there is no attack, we use the consensus update rule extended from \cite{olfati2007consensus}, given as
\begin{equation}\label{m1}
	\begin{aligned}
		x[k+1]=x[k] +  u[k], \ \ \ 
		u[k]=-L[k]x[k],
	\end{aligned}
\end{equation}
where $x[k]\in \mathbb{R}^n$ and $u[k]\in \mathbb{R}^n$ are the state vector and the control input vector, respectively.
The power graph $\mathcal{G}^l[k]$ at time $k$ is determined by the messages used for updates by the agents, i.e., $m_{ji}[k]$ for $ i\in \mathcal{V}$ and $j\in \mathcal{N}_i^{l-}$. The adjacency matrix $A[k]$ and the Laplacian matrix $L[k]$ at time $k$ are determined accordingly. Considering possible adversaries in $\mathcal{A}$, normal nodes use the resilient algorithm to be presented later for updating their values.

%Then we introduce asynchrony in our algorithm (\cite{dibaji2017resilient}, \cite{xiao2006state}). At each time $k$, normal node $i$ may or may not update its value. If node $i$ does not update, then $x_i[k+1]=x_i[k]$, i.e., $u_i[k]=0$. Denote by $\mathcal{U}[k] \subset \mathcal{V} $ the set of agents updating at time $k$. The system is said to be synchronous if $\mathcal{U}[k] = \mathcal{V} $ for all $k$, and otherwise it is asynchronous. Discussions about different asynchrony settings are presented in Section \ref{secasyn}.

%\subsection{Threat Models}\label{threatmodel}

We introduce the threat models extended from those studied in \cite{vaidya2012iterative}, \cite{leblanc2013resilient}.
\begin{defn}
	\textit{($f$-total/$f$-local set)}
	The set of adversary nodes $\mathcal{A}$ is said to be $f$-total
	if it contains at most $f$ nodes, i.e., $\left| \mathcal{A}\right| \leq f$.
	Similarly, it is said to be $f$-local (in $l$-hop neighbors)
	if any normal node $i$ has at most $f$ adversary nodes as its $l$-hop neighbors, i.e., $\left|\mathcal{N}_i^{l-} \cap \mathcal{A}\right| \leq f$.
\end{defn}

\begin{defn}
	\textit{(Byzantine nodes)}
	An adversary node $i\in \mathcal{A}$ is said to be Byzantine
	if it can arbitrarily modify its own value and relayed values and sends different state values and relayed values to its neighbors at each step.
\end{defn}

The Byzantine model is well studied in computer science (\cite{dolev1982byzantine, Lynch, vaidya2012iterative}). Note that the \textit{malicious} model studied in \cite{leblanc2013resilient}, \cite{dibaji2017resilient} is a weaker threat model as malicious nodes must send the same information to their neighbors, which is suitable for broadcast networks. We should also note that for the multi-hop communication case, the malicious model is considered in \cite{yuan2021resilient}.
%Byzantine model is often assumed in point-to-point networks.

As commonly done in the literature, 
we assume that each normal node knows the value of $f$ and the topology information of the graph up to $l$ hops. 
Moreover, to keep the problem tractable, we introduce the following assumption (\cite{su2017reaching}).
It is merely introduced for ease of analysis. In fact, manipulating message paths can be easily detected and hence does not create problems. 
We have shown how this can be done in \cite{yuan2021resilient}, inspired by \cite{su2017reaching}.

\begin{assum}
	Each Byzantine node $i$ can manipulate its state $x_i[k]$ and the values in messages that they send or relay, but cannot change the path $P$ in such messages. 
\end{assum}

\subsection{Resilient Asymptotic Consensus and Algorithm}

We define the resilient consensus notion used in this paper, which is also studied in, e.g., \cite{leblanc2013resilient}, \cite{su2017reaching},  \cite{dibaji2017resilient}.

\begin{defn}
	If for any possible sets and behaviors of the
	adversaries and any state values of the normal
	agents, the following conditions are satisfied,
	then we say that the normal agents reach 
	resilient asymptotic consensus:
	
	\begin{enumerate}
		\item Safety: There exists a bounded safety interval $\mathcal{S}$ determined by the initial values of the normal agents such that $x_i[k] \in \mathcal{S}, \forall i \in \mathcal{N}, k \in \mathbb{Z}_+$. 
		%The set $\mathcal{S}$ is called the safety interval.
		\item Agreement: There exists a state $x^*\in \mathcal{S}$
		such that $\lim_{k\to \infty}x_i[k]=x^*,  \forall i\in \mathcal{N}$.
	\end{enumerate}
	
\end{defn}

Next, we present the multi-hop weighted-MSR (MW-MSR) algorithm from our previous work (\cite{yuan2021resilient}) in Algorithm~1. The notion of minimum message cover (MMC) (\cite{su2017reaching}) is crucial in Algorithm~1, which is defined as follows.

\begin{defn} For a graph $\mathcal{G} = (\mathcal{V},\mathcal{E})$, let $\mathcal{M}$ be a set of messages transmitted through $\mathcal{G}$, and let $\mathcal{P}(\mathcal{M})$ be the set of message paths of all the messages in $\mathcal{M}$, i.e., $\mathcal{P}(\mathcal{M}) =\{\mathrm{path}(m):m \in \mathcal{M}\}$. A \textit{message cover} of $\mathcal{M}$ is a set of nodes $\mathcal{T}(\mathcal{M})\subset \mathcal{V}$ whose removal disconnects all message paths, i.e., for each path $P\in \mathcal{P}(\mathcal{M})$, we have $\mathcal{V}(P)\cap \mathcal{T}(\mathcal{M})\neq Ø$. In particular, a \textit{minimum message cover} of $\mathcal{M}$ is defined by
	\begin{equation*}
		\mathcal{T}^*(\mathcal{M})\in	\arg \min_{\substack{ \mathcal{T}(\mathcal{M}): \textup{ Cover of } \mathcal{M}}} 	\left|  \mathcal{T} (\mathcal{M})\right| . 
	\end{equation*}
\end{defn}
\vspace*{-5.5mm}

\begin{algorithm}[t] %\footnotesize
	\caption{: MW-MSR Algorithm} 
	\begin{algorithmic}
		\STATE 1) At each time $k$, for $\forall i \in \mathcal{N}$:
		
		\STATE Send $m_{ij}[k]=(x_i[k],P_{ij}[k])$ to $\forall j\in \mathcal{N}_i^{l+}$. 
		
		\STATE Receive $m_{ji}[k]=(x_j[k],P_{ji}[k])$ from $\forall j\in \mathcal{N}_i^{l-}$ and store them in the set $\mathcal{M}_i[k]$.
		
		\STATE Sort $\mathcal{M}_i[k]$ in an increasing order based on the message values (i.e., $x_j[k]$ in $m_{ji}[k]$).
		
		\STATE 2) Remove extreme values:

		\STATE (a) Define two subsets of $\mathcal{M}_i[k]$:
		\vspace{1mm}
		
		\STATE \hspace{2.2mm} $\overline{\mathcal{M}}_i[k]=\{ m\in \mathcal{M}_i[k]: \mathrm{value}(m)> x_i[k]  \}$,
		\vspace{1mm}
		
		\STATE \hspace{2.2mm} $\underline{\mathcal{M}}_i[k]=\{ m\in \mathcal{M}_i[k]: \mathrm{value}(m)< x_i[k]  \}$.
		\vspace{1mm}
		
		\STATE (b) Get $\overline{\mathcal{R}}_i[k]$ from $\overline{\mathcal{M}}_i[k]$:

		\IF{$\left|  \mathcal{T}^* (\overline{\mathcal{M}}_i[k])\right| <f$}
		\STATE $\overline{\mathcal{R}}_i[k] = \overline{\mathcal{M}}_i[k]$;
		\ELSE
		\STATE Choose $\overline{\mathcal{R}}_i[k]$ s.t. (i)
		$\forall m\in \overline{\mathcal{M}}_i[k]\setminus \overline{\mathcal{R}}_i[k]$, $\forall m'\in \overline{\mathcal{R}}_i[k]$, $\mathrm{value}(m) \leq \mathrm{value}(m')  \medspace\medspace \medspace \medspace  $ 
		and (ii) $\left|  \mathcal{T}^* (\overline{\mathcal{R}}_i[k])\right| =f$. 
		\ENDIF
		
		\STATE (c) Get $\underline{\mathcal{R}}_i[k]$ from $\underline{\mathcal{M}}_i[k]$ similarly, which contains smallest message values.
		
		\STATE (d) $\mathcal{R}_i[k]=\overline{\mathcal{R}}_i[k]\cup\underline{\mathcal{R}}_i[k]$.

		\STATE 3) Update:
		$a_{i}[k]=1/(\left| \mathcal{M}_i[k]\setminus \mathcal{R}_i[k] \right| )$,
		\begin{equation}
			x_i[k+1]=\sum_{m\in \mathcal{M}_i[k]\setminus \mathcal{R}_i[k]} a_{i}[k] \medspace \mathrm{value}(m).  \label{updaterule2}
		\end{equation}
	\end{algorithmic}
\end{algorithm}

In Algorithm~1, normal node $i$ can trim away the largest and smallest values from exactly $f$ nodes within $l$ hops away.
Clearly, as the number $l$ of hops grows, the candidate nodes increase and the trimming step 2 becomes more complicated. The reason is that in the multi-hop setting, each node relays the values from different neighbors, node $i$ can receive more than one value from one direct neighbor at each step.
For more details of Algorithm~1, we refer to \cite{yuan2021resilient}.

To characterize the number of the extreme values from exactly $f$ nodes for node $i$, the notion of minimum message cover (MMC) is designed. 
Intuitively speaking, for normal node $i$, $\overline{\mathcal{R}}_i[k]$ and $\underline{\mathcal{R}}_i[k]$ are the largest sized sets of received messages containing very large and small values that may have been generated or tampered by $f$ adversary nodes, respectively. 
Here, we focus on how $\overline{\mathcal{R}}_i[k]$ is determined (depicted in Fig.~\ref{flow}), as $\underline{\mathcal{R}}_i[k]$ can be obtained in a similar way. 
%The procedure for determining $\overline{\mathcal{R}}_i[k]$ is given in the flowchart in Fig.~\ref{flow}.
When the cardinality of the MMC of set $\overline{\mathcal{M}}_i[k]$ (in step 2(a)) is no more than $f$, node $i$ simply takes $\overline{\mathcal{R}}_i[k]=\overline{\mathcal{M}}_i[k]$. 
Otherwise, node $i$ will check the largest $q:=f+1$ values of $\overline{\mathcal{M}}_i[k]$, and if the MMC of these values is of cardinality $f$, then it will check the first $q=q+1$ values of $\overline{\mathcal{M}}_i[k]$. This procedure will continue until for the first $q$ values of $\overline{\mathcal{M}}_i[k]$, the MMC of these values is of cardinality $f+1$. Then $\overline{\mathcal{R}}_i[k]$ is taken as the first $q-1$ values of $\overline{\mathcal{M}}_i[k]$.
After sets $\overline{\mathcal{R}}_i[k]$ and $\underline{\mathcal{R}}_i[k]$ are determined, in step~3, node $i$ excludes the values in these sets and updates its value using the remaining values in $\mathcal{M}_i[k]\setminus \mathcal{R}_i[k]$.

\begin{figure}[t]
	\centering
	\includegraphics[width=2.9in]{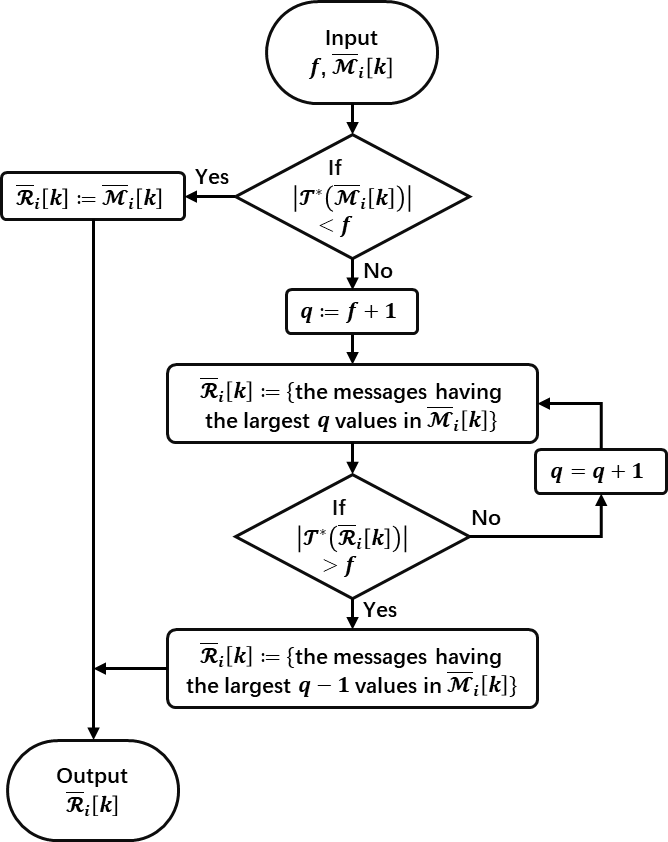}
	\caption{The flow of determining $\overline{\mathcal{R}}_i[k]$.}
	\label{flow}
	%\vspace*{-3.5mm}
\end{figure}

In this paper, the main goal is to characterize the conditions on the network structure that guarantee approximate asynchronous Byzantine consensus using the MW-MSR algorithm. Before proceeding to such an analysis, in Section 3, we introduce the important notion of strictly robust graphs. In Section~4, we first consider the case of synchronous updates. Then, in Section~5, we consider the more realistic situation using multi-hop techniques, which is the asynchronous updates with time delays in the communication among agents.

\section{Strictly Robust Graphs with Multi-hop Communication}\label{secrobustness}

In this section, we provide the definition of \textit{strictly robust graphs with $l$ hops}, which is the key graph condition to guarantee Byzantine consensus.

\subsection{The Notion of $(r,s)$-Robust Graphs with $l$ Hops}

The notion of robust graphs was first introduced in \cite{leblanc2013resilient}, and it was proved that graph robustness gives a tight graph condition for MSR-based algorithms guaranteeing resilient consensus under the malicious model. In \cite{yuan2021resilient}, we generalized this notion to the multi-hop case. Its definition is given as follows.

\begin{defn}\label{rs-robust} A directed graph $\mathcal{G} = (\mathcal{V},\mathcal{E})$ is said to be $(r,s)$-robust with $l$ hops with respect to a given set $\mathcal{F}\subset \mathcal{V}$,
	if for every pair of nonempty disjoint subsets $\mathcal{V}_\text{1},\mathcal{V}_\text{2}\subset \mathcal{V}$, at least one of the following conditions holds:
	
	(1) $\mathcal{Z}_{\mathcal{V}_1}^r=\mathcal{V}_1$; 
	(2) $\mathcal{Z}_{\mathcal{V}_2}^r=\mathcal{V}_2$;
	(3) $\left| \mathcal{Z}_{\mathcal{V}_1}^r\right| +\left| \mathcal{Z}_{\mathcal{V}_2}^r\right| \geq s$,
	
	%	\begin{enumerate}
		%		\item $\mathcal{Z}_{\mathcal{V}_1}^r=\mathcal{V}_1$;
		%		\item $\mathcal{Z}_{\mathcal{V}_2}^r=\mathcal{V}_2$;
		%		\item  $\left| \mathcal{Z}_{\mathcal{V}_1}^r\right| +\left| \mathcal{Z}_{\mathcal{V}_2}^r\right| \geq s$;
		%	\end{enumerate}
	
	\noindent where $\mathcal{Z}_{\mathcal{V}_a}^\textit{r}$ is the set of nodes in $\mathcal{V}_\textit{a}$ ($a=1,2$) that have at least $r$ independent paths of at most $l$ hops originating from nodes outside $\mathcal{V}_\textit{a}$ and all these paths do not have any nodes in set $\mathcal{F}$ as intermediate nodes (i.e., the nodes in $\mathcal{F}$ can be source or destination nodes in these paths).
	Moreover, if the graph $\mathcal{G}$ satisfies this property with respect to any set $\mathcal{F}$ satisfying the $f$-local model, then we say that $\mathcal{G}$ is $(r,s)$-robust with $l$ hops under the $f$-local model. When it is clear from the context, we just say $\mathcal{G}$ is $(r,s)$-robust with $l$ hops. If $\mathcal{G}$ is $(r,1)$-robust with $l$ hops, it is also defined as $r$-robust with $l$ hops.
\end{defn}

Intuitively speaking, for any set $\mathcal{F}\subset \mathcal{V}$, and for node $i\in \mathcal{V}_\text{1}$ to have the abovementioned property, there should be at least $r$ source nodes outside $\mathcal{V}_\text{1}$ and at least one independent path of length at most $l$ hops from each of the $r$ source nodes to node $i$, where such a path does not contain any internal nodes from the set $\mathcal{F}$. 
%\textcolor{red}{
In the multi-hop relay environment, the adversary agents can also manipulate the relayed values. Thus, the robustness with $l$ hops is defined with respect to set $\mathcal{F}$ to characterize the ability of node $i$ receiving the original values of the multi-hop agents.
As an example, node $i\in \mathcal{V}_1$ in Fig.~\ref{3-reachable}(a) has $r=2$ independent paths of at most two hops originating from the nodes outside $\mathcal{V}_1$ with respect to set $\mathcal{F}=\{j\}$, while node $i$ in Fig.~\ref{3-reachable}(b) does not.

\begin{figure}[t]
	\centering
	\subfigure[]{
		\includegraphics[width=1.4in]{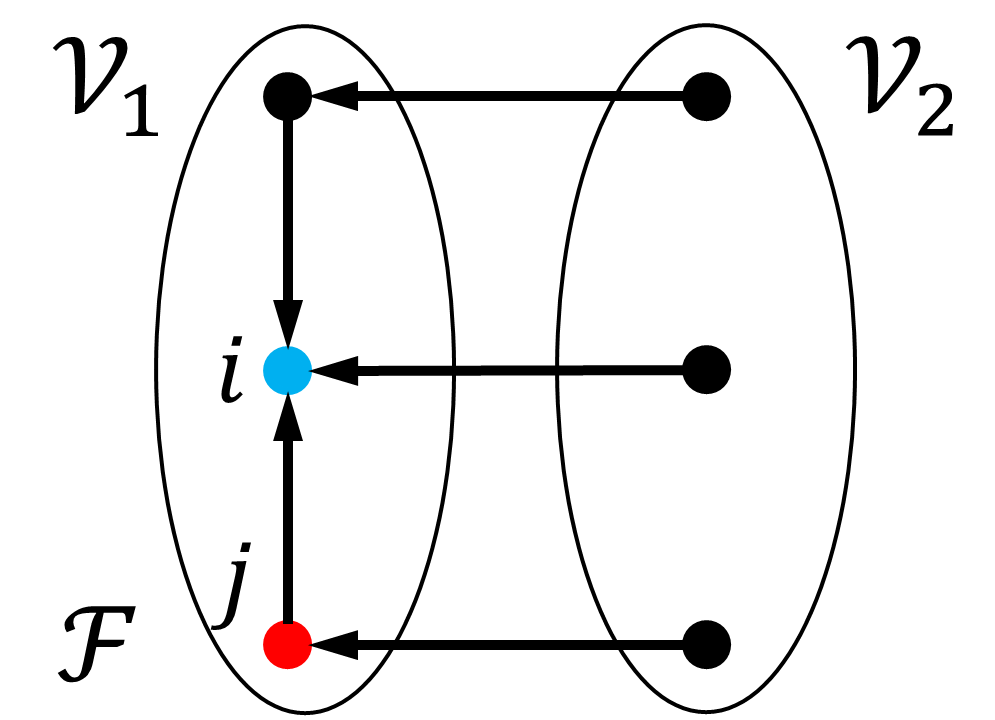}
	}
	\quad
	%\vspace{-3pt}
	\subfigure[]{
		\includegraphics[width=1.4in]{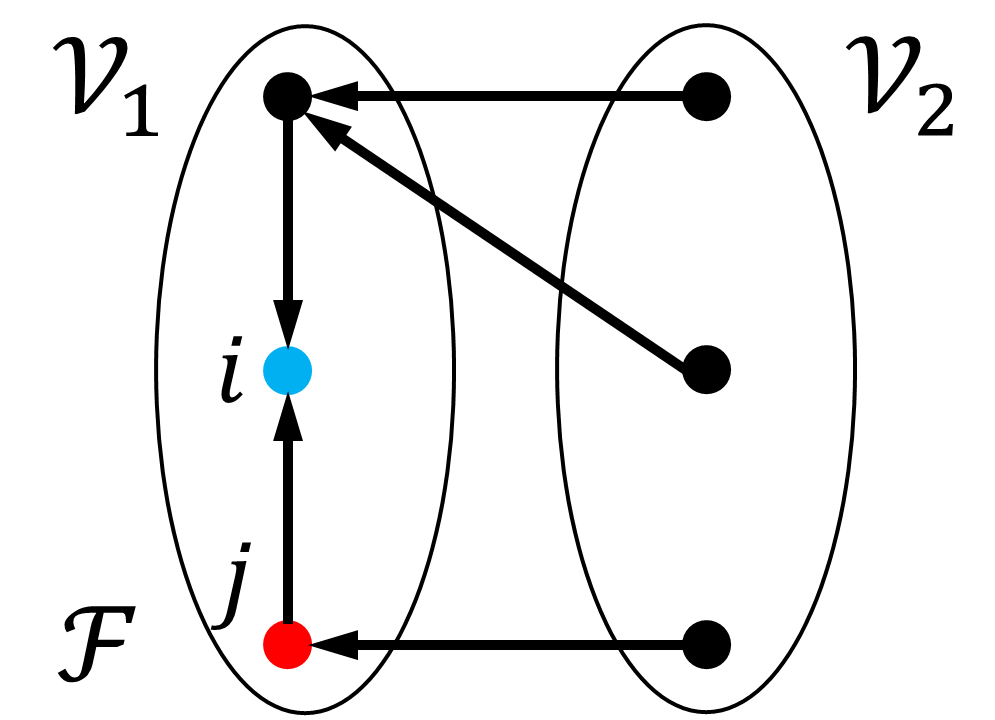}
	}
	%\vspace{-3pt}
	\caption{(a) Node $i$ has two independent paths originating from the outside of $\mathcal{V}_1$ and do not go through the nodes in the set $\mathcal{F}=\{j\}$. (b) Node $i$ has only one independent path sharing the same property.}
	\label{3-reachable}
	%\vspace*{-3.5mm}
\end{figure}

Here, we provide some properties of robust graphs with $l$ hops (\cite{yuan2021resilient}). Note that all the properties listed coincide with the ones of one-hop case in \cite{leblanc2013resilient} when $l=1$. Here, $\lceil \cdot\rceil$ denotes the ceiling function.

\begin{lem}\label{robustlemma}
	If a graph $\mathcal{G} = (\mathcal{V},\mathcal{E})$ is $(r, s)$-robust with $l$ hops, then the following hold:
	\begin{enumerate}
		\item $\mathcal{G}$ is $(r', s')$-robust with $l$ hops, where $0 \leq r' \leq r, 1 \leq s' \leq s$.
		\item $\mathcal{G}$ is $(r, s)$-robust with $l'$ hops, where $l \leq l'$.
		\item $\mathcal{G}$ is $(r-1, s+1)$-robust with $l$ hops.
		\item  $\mathcal{G}$ has a directed spanning tree. Moreover, if $\mathcal{G}$ is undirected, then it is $r$-connected.
		\item  $r\leq \lceil n/2\rceil$. Moreover, $\mathcal{G}$ is $(r, s)$-robust with $l$ hops if it is $(r+s-1)$-robust with $l$ hops.
	\end{enumerate}

\end{lem}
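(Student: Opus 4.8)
The plan is to prove the five properties of Lemma~\ref{robustlemma} one at a time, mostly by unwinding Definition~\ref{rs-robust} and observing that enlarging $r$, $l$, or the set $\mathcal{V}_a$ only makes the separating condition harder to violate. Throughout I would fix an arbitrary $\mathcal{F}\subset\mathcal{V}$ compatible with the $f$-local model (or an arbitrary $\mathcal{F}$, depending on which version of the statement is in play) and argue for a fixed pair of nonempty disjoint sets $\mathcal{V}_1,\mathcal{V}_2$, since the robustness notion quantifies over all such pairs.

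First, for item~(1), the key observation is monotonicity of the sets $\mathcal{Z}_{\mathcal{V}_a}^{r}$ in the superscript: if a node has at least $r$ independent $\mathcal{F}$-internal-free paths of length $\le l$ from outside $\mathcal{V}_a$, it certainly has at least $r'$ of them for $r'\le r$, so $\mathcal{Z}_{\mathcal{V}_a}^{r}\subset\mathcal{Z}_{\mathcal{V}_a}^{r'}$. Hence conditions (1) and (2) of the definition are preserved when we drop $r$ to $r'$, and condition (3) only gets easier because $|\mathcal{Z}_{\mathcal{V}_a}^{r'}|\ge|\mathcal{Z}_{\mathcal{V}_a}^{r}|$ and $s'\le s$. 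For item~(2), I would note that any path of length $\le l$ is also a path of length $\le l'$ when $l\le l'$, so $\mathcal{Z}_{\mathcal{V}_a}^{r}$ computed with the $l$-hop restriction is contained in the one computed with the $l'$-hop restriction; the same three-case argument then goes through. Item~(4) follows from item~(1) by taking $r'=1$ and $s'=1$: a $1$-robust-with-$l$-hops graph, specialized to pairs $\mathcal{V}_1=\{i\}$ and $\mathcal{V}_2=\mathcal{V}\setminus\{i\}$, forces every singleton to be reachable, which yields a directed spanning tree (and in the undirected case one bootstraps to $r$-connectivity using Menger's theorem together with item~(1) applied with $s'=1$ and the full $r$).

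The slightly more delicate items are (3) and (5). For item~(3), I would take a pair $\mathcal{V}_1,\mathcal{V}_2$ and consider $\mathcal{Z}_{\mathcal{V}_a}^{r}$ versus $\mathcal{Z}_{\mathcal{V}_a}^{r-1}$. If for the $(r,s)$-robust hypothesis one of the first two conditions holds, that same condition trivially holds for $(r-1,s+1)$ because $\mathcal{Z}^{r}\subset\mathcal{Z}^{r-1}$ forces $\mathcal{Z}^{r-1}_{\mathcal{V}_a}=\mathcal{V}_a$ as well once $\mathcal{Z}^{r}_{\mathcal{V}_a}=\mathcal{V}_a$. So assume condition (3) holds for $r$: $|\mathcal{Z}_{\mathcal{V}_1}^{r}|+|\mathcal{Z}_{\mathcal{V}_2}^{r}|\ge s$. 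I want $|\mathcal{Z}_{\mathcal{V}_1}^{r-1}|+|\mathcal{Z}_{\mathcal{V}_2}^{r-1}|\ge s+1$. The inclusion $\mathcal{Z}^{r}\subset\mathcal{Z}^{r-1}$ gives $\ge s$ for free; the extra $+1$ is where the argument must do a little work — the standard trick is a case split on whether $\mathcal{Z}_{\mathcal{V}_1}^{r-1}=\mathcal{V}_1$ (then condition (1) for $(r-1,s+1)$ is met and we are done) or not, in which case there is a node in $\mathcal{V}_1$ with fewer than $r-1$ such paths; pairing this with the analogous dichotomy on $\mathcal{V}_2$, and using that whenever $\mathcal{Z}^{r-1}_{\mathcal{V}_a}\ne\mathcal{V}_a$ there is slack, one extracts the missing unit. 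This bookkeeping step — getting the precise ``$+1$'' while respecting all three alternatives — is the main obstacle, and it is where I would be most careful. Finally, for item~(5), the bound $r\le\lceil n/2\rceil$ comes from splitting $\mathcal{V}$ into two sets of sizes $\lfloor n/2\rfloor$ and $\lceil n/2\rceil$: a node in the smaller part can have at most $\lceil n/2\rceil$ incoming independent paths from outside, so if $r>\lceil n/2\rceil$ no node lies in either $\mathcal{Z}^{r}$ and robustness fails. The last claim, that $(r+s-1)$-robustness with $l$ hops implies $(r,s)$-robustness with $l$ hops, follows by iterating item~(3): starting from $(r+s-1,1)$-robustness and applying item~(3) exactly $s-1$ times yields $(r,s)$-robustness. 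Since each step is either a direct containment argument or a finite iteration of item~(3), no heavy machinery is needed; the whole proof is essentially a careful audit of the definition, and I expect the write-up to lean on the monotonicity inclusion $\mathcal{Z}_{\mathcal{V}_a}^{r}\subset\mathcal{Z}_{\mathcal{V}_a}^{r-1}$ as the workhorse.
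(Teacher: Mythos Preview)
The paper does not actually prove Lemma~\ref{robustlemma}: it is stated as a collection of known properties with a citation to \cite{yuan2021resilient}, so there is no in-paper proof to compare your proposal against. Your outline is therefore being judged on its own merits rather than against the authors' argument.

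On those merits, items (1), (2), and the second half of (5) are fine; the monotonicity inclusion $\mathcal{Z}_{\mathcal{V}_a}^{r}\subset\mathcal{Z}_{\mathcal{V}_a}^{r'}$ for $r'\le r$ and the analogous inclusion in $l$ are exactly the right workhorses, and deducing $(r,s)$ from $(r+s-1,1)$ by iterating (3) is the standard route. Two places deserve more care. For item (4), the singleton argument you sketch does not by itself yield a directed spanning tree: knowing that every $\{i\}$ has an incoming path from $\mathcal{V}\setminus\{i\}$ only says every node has positive in-degree (in $\mathcal{G}^l$), which is not sufficient. The correct move is to assume no spanning tree, extract two distinct root strongly connected components of the condensation, and use those as $\mathcal{V}_1,\mathcal{V}_2$ to violate $(1,1)$-robustness. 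For item (3), you correctly flag the ``$+1$'' as the crux, but the phrase ``whenever $\mathcal{Z}^{r-1}_{\mathcal{V}_a}\neq\mathcal{V}_a$ there is slack'' is not yet an argument: the inclusions alone give only $|\mathcal{Z}^{r-1}_{\mathcal{V}_1}|+|\mathcal{Z}^{r-1}_{\mathcal{V}_2}|\ge s$, and you need an additional construction (typically passing to the smaller pair $\mathcal{V}_a\setminus\mathcal{Z}^{r-1}_{\mathcal{V}_a}$ and reapplying the $(r,s)$ hypothesis there) to manufacture the extra unit. As written, the case split you describe does not close.
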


\subsection{The Notion of $r$-Strictly Robust Graphs with $l$ Hops}\label{sec_strict_robustness}

%Generally, Byzantine nodes can send different values to different neighbors. That means for an $l$-hop ($l\geq 2$) Byzantine neighbor $j$ of node $i$, node $i$ can receive different values of node $j$ even from the normal nodes between nodes $j$ and $i$. Therefore, more connections among the normal nodes are required for Byzantine consensus. 

To deal with the Byzantine model, we need to focus on the subgraph consisting of only the normal nodes.
Define such a subgraph as the \textit{normal network} as follows.

\begin{defn}
	%\textit{(Normal network)} 
	For a network $\mathcal{G} = (\mathcal{V}, \mathcal{E})$, define the normal
	network of $\mathcal{G}$, denoted by $\mathcal{G}_{\mathcal{N}}$, as the network induced by the normal nodes, i.e., $\mathcal{G}_{\mathcal{N}} = (\mathcal{N},\mathcal{E}_{\mathcal{N}} )$, where $\mathcal{E}_{\mathcal{N}}$ is the set of directed edges among the normal nodes.
	
\end{defn}

For the one-hop MSR algorithm in \cite{leblanc2013resilient}, the graph condition that the normal network is $(f+1)$-robust is proved to be necessary and sufficient for achieving resilient consensus under the $f$-total Byzantine model.
However, in practice, the normal nodes are not aware of the identity of the Byzantine nodes. Hence, the above condition can not be checked a priori. Therefore, we define our graph condition on the original graph topology as \cite{vaidya2012iterative}, \cite{su2017reaching} did and we formally introduce $r$-strictly robust graphs with $l$ hops as follows.

\begin{defn}\label{strictrobust}
	Let $\mathcal{F} \subset \mathcal{V}$  and denote the subgraph of $\mathcal{G}$ induced
	by node set $\mathcal{H}=\mathcal{V}\setminus\mathcal{F}$ as $\mathcal{G}_{\mathcal{H}}$.
	Graph $\mathcal{G}$ is said to be $r$-strictly robust with $l$ hops with respect to $\mathcal{F}$ if the subgraph $\mathcal{G}_{\mathcal{H}}$ is $r$-robust with $l$ hops with respect to $\mathcal{F}$ in graph $\mathcal{G}$.\footnote[2]{Note that the removed node set $\mathcal{F}$ is still used to count the robustness of the remaining graph $\mathcal{G}_{\mathcal{H}}$ since strict robustness is a property of the original graph $\mathcal{G}$. Moreover, the current definition brings the connection between the notions of robustness and strict robustness.  }
	If graph $\mathcal{G}$ satisfies this property with respect to any set $\mathcal{F}$ satisfying the $f$-total/local model, then we say that $\mathcal{G}$ is $r$-strictly robust with $l$ hops (under the $f$-total/local model).
	%When it is clear from the context, we just say $\mathcal{G}$ is $r$-strictly robust with $l$ hops.
\end{defn}

Robustness with $l\geq2$ hops and strict robustness with $l\geq1$ hops depend on the choice of set $\mathcal{F}$. This set further depends on the threat models.
We illustrate how multi-hop relaying can improve strict robustness through examples.
The graphs in Fig.~\ref{1lcoal} are not $2$-strictly robust with $1$ hop, e.g., in Fig.~\ref{1lcoal}(b), if we remove node 3, the remaining graph is not $2$-robust.
The two graphs are however $2$-strictly robust with $2$ hops under the $1$-local model. Note that to verify the strict robustness, we must check that after removing any node set $\mathcal{F}$ being $1$-local, the remaining graph is $2$-robust with $2$ hops. 

\begin{figure}[t]
	\centering
	\subfigure[]{
		\includegraphics[width=1.1in]{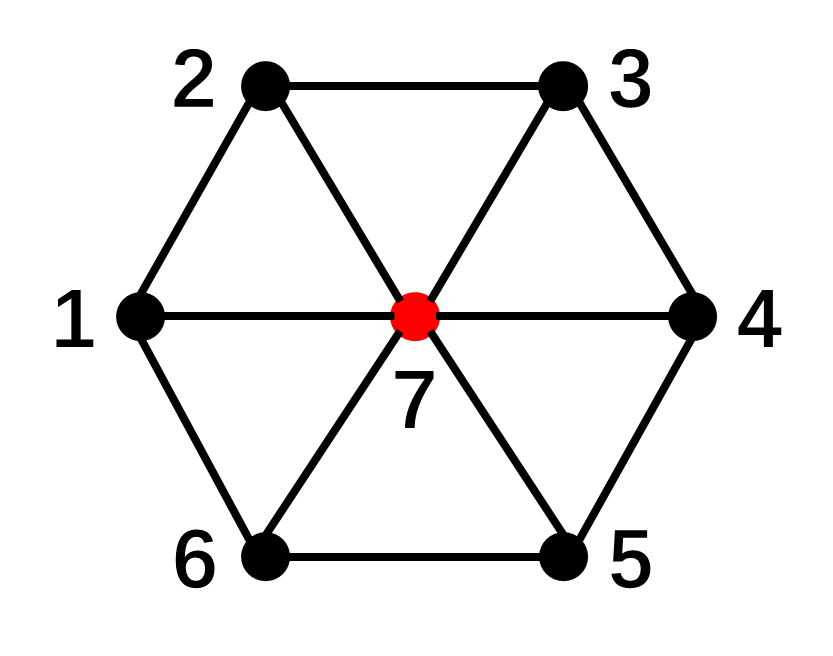}
	}
	%\quad
	\vspace{-3pt}
	\subfigure[]{
		\includegraphics[width=1.8in]{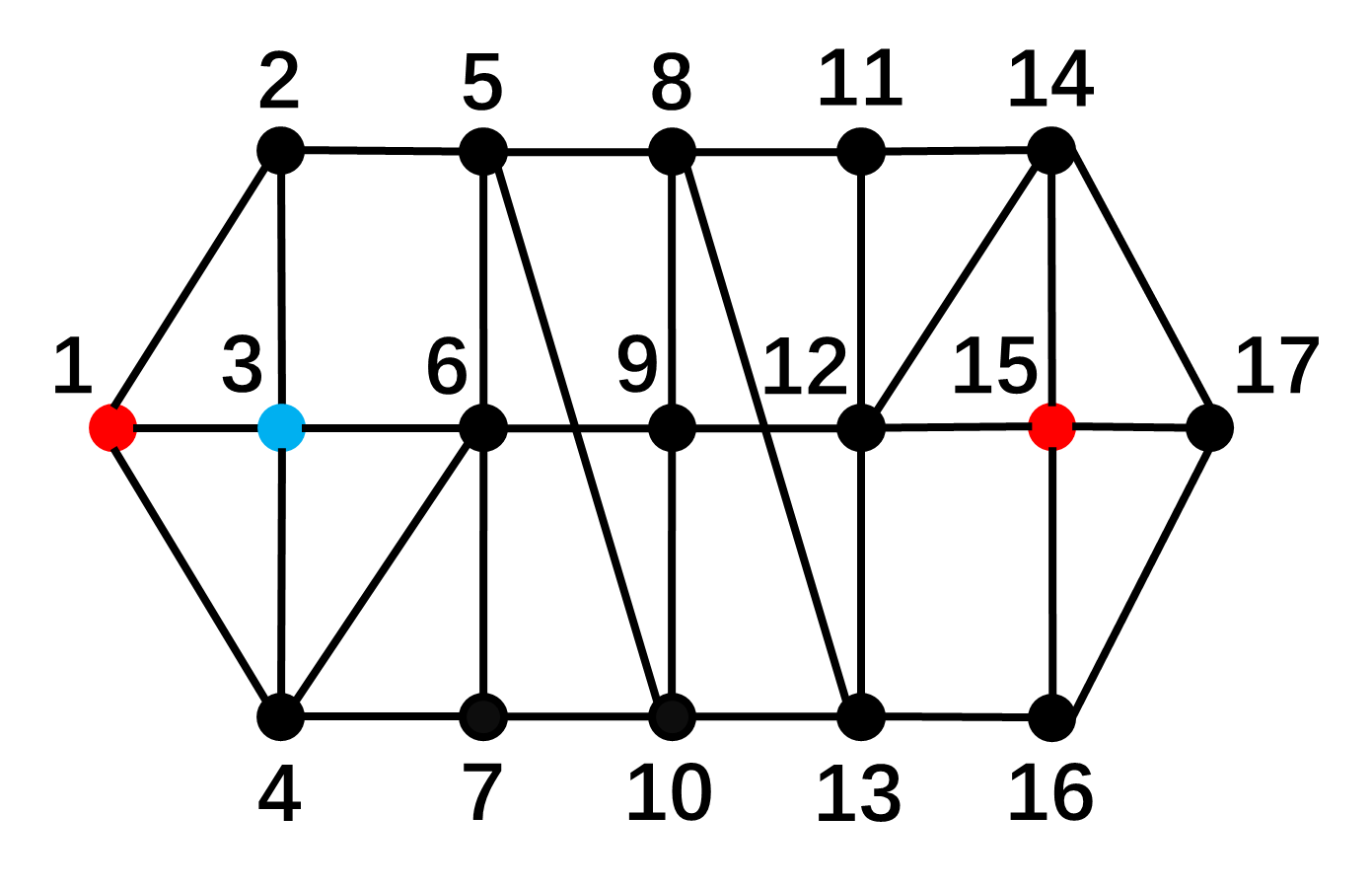}
	}
	\vspace{-3pt}
	\caption{ Both undirected graphs are not $2$-strictly robust with $1$ hop but are $2$-strictly robust with $2$ hops under the 1-local model. }
	\label{1lcoal}
	%\vspace*{-3.5mm}
\end{figure}

\section{Synchronous Byzantine Consensus}\label{secsyn}

In this section, we provide the analysis of the MW-MSR algorithm under synchronous updates. 

It is worth noting that \cite{su2017reaching} investigated an MSR-based algorithm with multi-hop communication under the $f$-total Byzantine model. They provided a necessary and sufficient graph condition
for their algorithm to achieve synchronous Byzantine consensus. While their proof techniques are
different, the condition can be interpreted by the notion of strict robustness with $l$ hops as well.
Here, we extend the proof for the $f$-local model, which contains the case of the $f$-total model.
Besides, based on our proof scheme, we can provide the analysis of our algorithm applied in asynchronous updates with delays next in Section~5; such a case is absent in \cite{su2017reaching}.

Denote the vectors consisting of the states of the normal nodes and those of the Byzantine nodes by $x^N[k]$ and $x^A[k]$, respectively. Then, we present the main result of this section in the following.

\begin{prop}\label{syn}
	%\textit{(\cite{su2017reaching}):}
	Consider a directed graph $\mathcal{G} = (\mathcal{V},\mathcal{E})$ with $l$-hop communication, where each normal node updates its value according to the synchronous MW-MSR algorithm with parameter
	$f$. Under the $f$-local Byzantine model, resilient asymptotic
	consensus is achieved with safety interval $\mathcal{S}=\big[ \min x^N[0], \max x^N[0] \big]$ if and only if  $\mathcal{G}$ is $(f + 1)$-strictly robust with $l$ hops.
\end{prop}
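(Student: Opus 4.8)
The plan is to prove both implications by specializing the standard MSR-type machinery to the multi-hop, $f$-local Byzantine setting, with the normal network $\mathcal{G}_{\mathcal N}$ in the central role. The first step is a reduction: since the true adversary set $\mathcal{A}$ satisfies the $f$-local model, taking $\mathcal{F}=\mathcal{A}$ in Definition~\ref{strictrobust} shows that $(f+1)$-strict robustness of $\mathcal{G}$ implies that $\mathcal{G}_{\mathcal N}=\mathcal{G}_{\mathcal H}$ is $(f+1)$-robust with $l$ hops with respect to $\mathcal{A}$; this is the property that will feed the convergence argument. For the sufficiency direction I would then first establish safety. The key observation is that any message $m$ received by a normal node $i$ with $\mathrm{value}(m)>\overline{x}[k]:=\max_{j\in\mathcal N}x_j[k]$ must travel along a path containing a Byzantine node (an all-normal path delivers the source's true value, which lies in $[\underline{x}[k],\overline{x}[k]]$), and by $|\mathcal{N}_i^{l-}\cap\mathcal{A}|\le f$ the Byzantine nodes on $i$'s incoming $\le l$-hop paths form a message cover of size at most $f$ for the set of all such ``high'' messages; since these are exactly the top entries of the sorted list $\overline{\mathcal{M}}_i[k]$, Step~2(b) of Algorithm~1 trims all of them, and symmetrically at the bottom. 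Hence $x_i[k+1]$ is a convex combination of values in $[\underline{x}[k],\overline{x}[k]]$, which gives $x_i[k]\in\mathcal{S}=[\min x^N[0],\max x^N[0]]$ for all $k$, shows that $\overline{x}[k]$ is nonincreasing and $\underline{x}[k]$ nondecreasing, and yields limits $\overline A\ge\underline A$.

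For convergence I would run the shrinking-interval argument. Assuming for contradiction $\overline A>\underline A$, fix geometrically decreasing tolerances $\epsilon_s$ with ratio tied to $\alpha$, and at successive times consider the normal-node sets $\mathcal{X}_A(\epsilon_s)=\{i\in\mathcal N:x_i>\overline A-\epsilon_s\}$ and $\mathcal{X}_a(\epsilon_s)=\{i\in\mathcal N:x_i<\underline A+\epsilon_s\}$, which are disjoint once $s$ is large. Applying $(f+1)$-robustness of $\mathcal{G}_{\mathcal N}$ (with respect to $\mathcal{A}$) to this pair produces a node $i$, say in $\mathcal{X}_A(\epsilon_s)$, with $f+1$ internally disjoint $\le l$-hop paths from outside $\mathcal{X}_A(\epsilon_s)$ whose intermediate nodes are all normal. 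The crucial lemma, the multi-hop/$f$-local analogue of the ones in \cite{leblanc2013resilient,su2017reaching}, is that such an $i$ necessarily retains, after the minimum-message-cover trimming of Step~2, the value of at least one normal node outside $\mathcal{X}_A(\epsilon_s)$: at most $f$ of the $f+1$ sources lie in $\mathcal{A}$, and a cover of size $f$ used by the trimming cannot meet all $f+1$ disjoint clean paths. Consequently $x_i$ falls below $\overline A-\epsilon_{s+1}$ at the next step, so $|\mathcal{X}_A(\epsilon_{s+1})|$ or $|\mathcal{X}_a(\epsilon_{s+1})|$ strictly decreases; after at most $|\mathcal N|$ repetitions one of the two sets becomes empty, contradicting $\overline x[k]\ge\overline A$ (respectively $\underline x[k]\le\underline A$). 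Hence $\overline A=\underline A=:x^*$ and $x_i[k]\to x^*$. I would phrase the safety and shrinking lemmas so that they also tolerate bounded delays, since the same scheme is reused for the asynchronous case in Section~5.

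For necessity I would contrapose. If $\mathcal{G}$ is not $(f+1)$-strictly robust with $l$ hops under the $f$-local model, there is an $f$-local set $\mathcal{F}$ and nonempty disjoint $\mathcal{V}_1,\mathcal{V}_2\subset\mathcal{V}\setminus\mathcal{F}$ for which all three conditions of Definition~\ref{rs-robust} fail with $r=f+1,\ s=1$, i.e.\ $\mathcal{Z}_{\mathcal{V}_1}^{f+1}=\mathcal{Z}_{\mathcal{V}_2}^{f+1}=\emptyset$. Let $\mathcal{A}=\mathcal{F}$ be Byzantine, initialize $\mathcal{V}_1$ at a value $a$, $\mathcal{V}_2$ at $b<a$, and all other normal nodes strictly inside $(b,a)$. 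For each $i\in\mathcal{V}_1$, because $i\notin\mathcal{Z}_{\mathcal{V}_1}^{f+1}$, the messages reaching $i$ along $\le l$-hop $\mathcal{F}$-clean paths from outside $\mathcal{V}_1$ admit a message cover of size at most $f$ (the cover characterization underlying Definition~\ref{rs-robust}), and each such message carries a normal value which is at most $a-\delta$ for some $\delta>0$, hence strictly below $x_i[k]=a$. Letting every Byzantine node send and relay the constant value $a$ toward $\mathcal{V}_1$ makes $\overline{\mathcal{M}}_i[k]=\emptyset$ and makes $\underline{\mathcal{M}}_i[k]$ consist exactly of those clean outside messages, which Step~2(c) trims in full since their cover has size $\le f$. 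Thus each $i\in\mathcal{V}_1$ updates to $a$, and symmetrically each node of $\mathcal{V}_2$ stays at $b$, so the normal nodes never agree and resilient consensus fails. Combining the two directions gives the equivalence.

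I expect the main obstacle to be the convergence half of sufficiency: carrying the shrinking-interval bookkeeping through the multi-hop relay model, where one direct neighbor may deliver several messages per step, and through the $f$-local (rather than $f$-total) accounting, and in particular proving the key lemma that $f+1$ disjoint $\mathcal{F}$-clean $\le l$-hop paths into $i$ force $i$ to use a normal out-of-set value in spite of the MMC-based trimming. A secondary point needing care is the equivalence between ``at most $f$ independent $\mathcal{F}$-clean paths'' and ``a message cover of size at most $f$'' used in the necessity construction; I would settle this through the same cover-based reasoning as in \cite{su2017reaching,yuan2021resilient} rather than appealing to bounded-length Menger directly.
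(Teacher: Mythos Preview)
Your proposal is correct and follows essentially the same route as the paper: for necessity you contrapose, fix an $f$-local set $\mathcal{F}$ witnessing the failure of $(f+1)$-strict robustness, place $\mathcal{V}_1,\mathcal{V}_2$ at the extremes, have Byzantine nodes in $\mathcal{F}$ feed those extremes back, and use the bound on independent $\mathcal{F}$-clean paths to argue that the MMC of the ``outside'' values is at most $f$ and hence fully trimmed; for sufficiency you reduce to $(f+1)$-robustness of $\mathcal{G}_{\mathcal N}$ w.r.t.\ $\mathcal{A}$ and run the safety-plus-shrinking-interval argument, which is exactly what the paper does by deferring to the proof of Theorem~\ref{asyntheorem}. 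The only cosmetic difference is that the paper spells out the necessity side via an explicit three-case analysis (direct neighbors only, multi-hop only, mixed) for bounding the MMC of the smaller values, whereas you invoke the Menger-type equivalence between ``at most $f$ independent $\mathcal{F}$-clean paths'' and ``message cover of size at most $f$'' directly---both arrive at the same conclusion.
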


\textit{Proof:} (Necessity) If $\mathcal{G}$ is not $(f + 1)$-strictly robust with $l$ hops, then by Definition~\ref{strictrobust}, 
there exists an $f$-local set $\mathcal{F}$ such that $\mathcal{G}$ is not $(f+1)$-strictly robust with $l$ hops with respect to $\mathcal{F}$. Suppose that $\mathcal{F}$ is exactly the set of Byzantine agents, and the normal network $\mathcal{G}_{\mathcal{N}}$ is not $(f +1)$-robust with $l$ hops w.r.t. this $\mathcal{F}$.
In such a case, there are nonempty, disjoint subsets $\mathcal{V}_1, \mathcal{V}_2\subset \mathcal{N}$ such that any node in the two sets has at most $f$ independent paths (only the node itself is common in these paths) of at most $l$ hops originating from normal nodes outside of its respective set.
Let the nodes in the two sets take the maximum and minimum values in the network, respectively. 
Suppose that the Byzantine nodes send the maximum and minimum values to the nodes in $\mathcal{V}_1$ and $ \mathcal{V}_2$, respectively. 

%\textcolor{red}{
	%Consider node $i\in\mathcal{V}_1$, since the cardinality of the minimum message cover of the values larger than itself (values from the Byzantine nodes) is at most $f$, and the cardinality of the minimum message cover of the values smaller than itself (values from the normal nodes outside of $\mathcal{V}_1$) is also at most $f$. Thus, it will discard the values from the outside of $\mathcal{V}_1$ and keep its value. }

%\textcolor{blue}{
	Consider node $i\in\mathcal{V}_1$. Since the cardinality of the minimum message cover of the values larger than itself (values from the Byzantine nodes) is at most $f$, node $i$ will discard these values. 
	We claim that the cardinality of the minimum message cover of the values smaller than itself (values from the normal nodes outside of $\mathcal{V}_1$) is also at most $f$. This can be proved in three cases: (i) All the incoming neighbors outside of $\mathcal{V}_1$ are direct neighbors of node $i$, (ii) all the incoming neighbors outside of $\mathcal{V}_1$ are $l$-hop ($l\geq 2$) neighbors of node $i$, and (iii) situations other than (i) and (ii). For case (i), it is clear that this statement holds. For case (ii), 
	either node $i$ has at most $f$ independent paths from the $l$-hop ($l\geq 2$) neighbors outside of $\mathcal{V}_1$, where the cardinality of the $l$-hop neighbors can be larger than $f$; or node $i$ has more than $f$ independent paths from the $l$-hop ($l\geq 2$) neighbors outside of $\mathcal{V}_1$, where the cardinality of the $l$-hop neighbors can be at most $f$. In either case, the cardinality of the minimum message cover of the minimum values is at most $f$.
	For case (iii), note that the direct neighbors will be part of the minimum message cover always. For the remaining $l$-hop neighbors outside, following the analysis for case (ii), we can conclude that the cardinality of the minimum message cover of the minimum values is at most $f$. Thus, in all cases, node $i$ discards the values from the nodes outside of $\mathcal{V}_1$ and keeps its value. 
	%}

Similar analysis applies when $i\in\mathcal{V}_2$. Therefore, nodes in these two sets never use any values from outside their respective sets and consensus cannot be reached.

(Sufficiency) Besides the method used in \cite{su2017reaching}, we can prove the sufficiency part using the analysis as shown in the proof of Theorem \ref{asyntheorem}, which is for asynchronous updates, since synchronous updates form one special case.
\hfill  $\blacksquare$

We must note that if $\mathcal{G}$ is $(f + 1)$-strictly robust with $l$ hops, then the normal network $\mathcal{G}_{\mathcal{N}}$ is guaranteed to be $(f + 1)$-robust with $l$ hops for any possible cases of the adversary set $\mathcal{A}$ under the $f$-local model. The latter condition is tighter than the former one, but it is not checkable in practice since the identities of the adversary nodes in $\mathcal{A}$ are unknown.
Thus, in Proposition \ref{syn}, we provide the graph condition on $\mathcal{G}$ instead of the condition on the normal network $\mathcal{G}_{\mathcal{N}}$.

We emphasize that our result is a generalization of those in the literature. As mentioned earlier, the work by \cite{su2017reaching} is restricted to the $f$-total model. On the other hand, \cite{dolev1982byzantine} has studied the undirected networks case where the multi-hop communication has unbounded path lengths. In fact, our condition is equivalent to the condition there: (i) $n\geq 3f+1$ and (ii) the graph connectivity is no less than $2f+ 1$.
We can establish the condition (i) by noticing that complete networks have the largest robustness.
By Lemma \ref{robustlemma}\,(2), the robustness of such a graph after removing any $f$ nodes is no greater than $\lceil \frac{n-f}{2}\rceil$. Thus, our result implies $\lceil \frac{n-f}{2}\rceil \geq f+1$, which is equivalent to $n\geq 3f+1$.
For the connectivity condition (ii), note that the graph after removing any $f$ nodes needs to be $(f+1)$-robust with $l$ hops. 
Therefore, a graph satisfying $(f +1)$-strict robustness with $l$ hops has connectivity no less than $2f +1$.

\subsection{Discussions on Different Graph Conditions}\label{secsyn2}

Table~\ref{table1} summarizes graph conditions for resilient consensus under different threat models and update schemes. Notably, there are three conditions under the $f$-total/local model. We clarify the relations and order among them in the next proposition.

\begin{prop}\label{threeconditions}
	For the following graph conditions on any directed graph $\mathcal{G} = (\mathcal{V},\mathcal{E})$ under the $f$-total/local model, where $l\in \mathbb{Z}_+$:
	\vspace*{-3mm}
	
	(\textit{A}) $\mathcal{G}$ is $(2f+1)$-robust with $l$ hops,
	\vspace*{-3mm}
	
	(\textit{B}) $\mathcal{G}$ is $(f+1)$-strictly robust with $l$ hops,
	\vspace*{-3mm}
	
	(\textit{C}) $\mathcal{G}$ is $(f + 1,f+1)$-robust with $l$ hops,
	\vspace*{-3mm}
	
	\noindent  it holds that $(A) \Rightarrow (B)$ and $(B) \Rightarrow (C)$. Moreover, $(C)\nRightarrow (B)$ and $(B)\nRightarrow (A)$.
\end{prop}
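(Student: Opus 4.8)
The plan is to prove the two implications directly from the definitions and then exhibit small counterexamples for the two non-implications. Throughout, I would fix an arbitrary admissible adversary set $\mathcal{F}$ (either $f$-total or $f$-local) and work with the induced subgraph $\mathcal{G}_{\mathcal{H}}$, $\mathcal{H}=\mathcal{V}\setminus\mathcal{F}$, keeping in mind that robustness with $l$ hops is always measured with respect to $\mathcal{F}$ as in Definition~\ref{rs-robust}.

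For $(A)\Rightarrow(B)$: assume $\mathcal{G}$ is $(2f+1)$-robust with $l$ hops. Fix any admissible $\mathcal{F}$ and any nonempty disjoint $\mathcal{V}_1,\mathcal{V}_2\subset\mathcal{H}$. Viewing $\mathcal{V}_1,\mathcal{V}_2$ also as subsets of $\mathcal{V}$, one of the $(2f+1)$-robustness conditions holds in $\mathcal{G}$. If condition (3) holds, then $|\mathcal{Z}_{\mathcal{V}_1}^{2f+1}|+|\mathcal{Z}_{\mathcal{V}_2}^{2f+1}|\ge 1$, so some node in $\mathcal{V}_1\cup\mathcal{V}_2$ has $2f+1$ independent $\mathcal{F}$-internal-free $l$-hop paths originating outside its set; at most $f$ of the source nodes can lie in $\mathcal{F}$, so at least $f+1$ of these paths originate in $\mathcal{H}$, and by Assumption~1 removing $\mathcal{F}$ leaves those paths intact inside $\mathcal{G}_{\mathcal{H}}$ — hence that node lies in $\mathcal{Z}^{f+1}_{\mathcal{V}_a}$ computed in $\mathcal{G}_{\mathcal{H}}$. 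If condition (1) $\mathcal{Z}_{\mathcal{V}_1}^{2f+1}=\mathcal{V}_1$ holds, then every node of $\mathcal{V}_1$ has $2f+1$ such paths in $\mathcal{G}$, hence $\ge f+1$ such paths from $\mathcal{H}$-sources surviving in $\mathcal{G}_{\mathcal{H}}$, so $\mathcal{V}_1=\mathcal{Z}^{f+1}_{\mathcal{V}_1}$ in $\mathcal{G}_{\mathcal{H}}$; similarly for (2). In every case $\mathcal{G}_{\mathcal{H}}$ satisfies the $(f+1)$-robust-with-$l$-hops condition for this pair, so $\mathcal{G}$ is $(f+1)$-strictly robust with $l$ hops. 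For $(B)\Rightarrow(C)$: assume $\mathcal{G}$ is $(f+1)$-strictly robust with $l$ hops, and take $\mathcal{F}=\emptyset$ (which is trivially $f$-total and $f$-local). Then $\mathcal{G}_{\mathcal{H}}=\mathcal{G}$ must be $(f+1)$-robust with $l$ hops w.r.t.\ $\mathcal{F}=\emptyset$; by Lemma~\ref{robustlemma}\,(3) applied $f$ times (or its direct statement that $(r,1)$-robust implies $(r-f,f+1)$-robust), $\mathcal{G}$ is $(f+1,f+1)$-robust with $l$ hops.

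For the non-implications I would search for the smallest graphs. For $(C)\nRightarrow(B)$: I expect a modest undirected graph on the order of $2f+2$ to $3f$ nodes (e.g.\ for $f=1$ a graph on $3$ or $4$ nodes, a cycle $C_4$ or a suitably chosen graph) that is $(2,2)$-robust with $1$ hop but, after deleting one well-chosen node, fails to be $2$-robust with $1$ hop — i.e.\ not $2$-strictly robust. The figures already in the paper (Fig.~\ref{1lcoal}) provide candidates: those graphs are \emph{not} $2$-strictly robust with $1$ hop, and one checks they are still $(2,2)$-robust with $1$ hop. For $(B)\nRightarrow(A)$: I need a graph that is $(f+1)$-strictly robust with $l$ hops but not $(2f+1)$-robust with $l$ hops. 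For $f=1$, $l=1$, take a graph of connectivity $3$ that is $2$-strictly robust with $1$ hop but not $3$-robust — since $3$-robustness on $n$ nodes requires, by Lemma~\ref{robustlemma}\,(5), $n\ge$ roughly $6$ and more, a small $3$-connected graph on few nodes (e.g.\ $K_4$ has only $4$ nodes, so $3$-robustness is capped) or a carefully built $7$-node graph will separate the conditions. The cleanest route may be $K_{2f+2}$: it is $(f+1)$-strictly robust with $1$ hop (removing any $f$ nodes leaves $K_{f+2}$, which is $(f+1)$-robust) yet is not $(2f+1)$-robust since by Lemma~\ref{robustlemma}\,(5) its robustness is at most $\lceil (2f+2)/2\rceil=f+1<2f+1$ for $f\ge1$.

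The main obstacle is the $(B)\nRightarrow(A)$ separation: I must make sure the witness graph genuinely fails $(2f+1)$-robustness with $l$ hops for \emph{all} $l$, not merely $l=1$, since strict robustness with $l$ hops is generally easier to satisfy as $l$ grows (Lemma~\ref{robustlemma}\,(2)). The cardinality bound $r\le\lceil n/2\rceil$ in Lemma~\ref{robustlemma}\,(5) is $l$-independent, so choosing a graph with $n\le 4f+1$ nodes forces the robustness below $2f+1$ regardless of $l$; combined with verifying $(f+1)$-strict robustness with $l$ hops (hardest for $l=1$, hence sufficient to check there by Lemma~\ref{robustlemma}\,(2)), this pins down the example. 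The $(A)\Rightarrow(B)$ direction requires care only in the bookkeeping that deleting up to $f$ \emph{internal} or \emph{source} nodes costs at most $f$ out of the $2f+1$ guaranteed paths — this is where Assumption~1 (paths are not manipulable) and the "$\mathcal{F}$-internal-free" clause in Definition~\ref{rs-robust} are used in tandem, and it should go through cleanly.
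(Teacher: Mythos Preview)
Your argument for $(A)\Rightarrow(B)$ matches the paper's and is fine. However, your proof of $(B)\Rightarrow(C)$ has a genuine gap. Taking $\mathcal{F}=\emptyset$ and invoking strict robustness only tells you that $\mathcal{G}$ itself is $(f+1)$-robust with $l$ hops, i.e., $(f+1,1)$-robust; Lemma~\ref{robustlemma}\,(3) applied $f$ times then yields $(1,f+1)$-robustness, \emph{not} $(f+1,f+1)$-robustness --- your own parenthetical ``$(r,1)$-robust implies $(r-f,f+1)$-robust'' says exactly this once you substitute $r=f+1$. The final jump to $(f+1,f+1)$ is unjustified, and in fact $(f+1)$-robust does not imply $(f+1,f+1)$-robust in general (if it did, then combined with Lemma~\ref{robustlemma}\,(1) the two notions would coincide). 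The paper instead proves the contrapositive: if $(C)$ fails, there is a pair $\mathcal{V}_1,\mathcal{V}_2$ for which at most $f$ nodes lie in $\mathcal{Z}_{\mathcal{V}_1}^{f+1}\cup\mathcal{Z}_{\mathcal{V}_2}^{f+1}$; one then \emph{chooses} $\mathcal{F}$ to be precisely those nodes, so that in $\mathcal{G}_{\mathcal{H}}$ the surviving parts of $\mathcal{V}_1,\mathcal{V}_2$ witness the failure of $(f+1)$-robustness. The key point is that $(B)$ must be exploited at a nontrivial, adversarially chosen $\mathcal{F}$, not just $\mathcal{F}=\emptyset$.

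A smaller issue: your $K_{2f+2}$ example for $(B)\nRightarrow(A)$ is correct only for $f=1$. For $f\ge 2$, removing $f$ nodes leaves $K_{f+2}$, whose robustness is at most $\lceil(f+2)/2\rceil<f+1$ by Lemma~\ref{robustlemma}\,(5), so $K_{2f+2}$ is not $(f+1)$-strictly robust. Since a single counterexample suffices, $K_4$ still does the job for $f=1$; the paper gives a dedicated $6$-node graph (Fig.~\ref{difference}(b)) that is $2$-strictly robust but not $3$-robust, and a separate graph (Fig.~\ref{difference}(c)) for $(C)\nRightarrow(B)$, rather than relying on the graphs of Fig.~\ref{1lcoal}, whose $(2,2)$-robustness with one hop you would still need to verify.
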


\begin{proof}
	($(A) \Rightarrow (B)$) For a graph satisfying $(A)$, take a set $\mathcal{F}$ satisfying the $f$-total/$f$-local model. Select any nonempty disjoint subsets $\mathcal{V}_1, \mathcal{V}_2\subset \mathcal{H}$, where $\mathcal{H}=\mathcal{V}\setminus\mathcal{F}$. Choose node $i\in \mathcal{Z}_{\mathcal{V}_1}^{2f+1}$. Then, after removing nodes in the set $\mathcal{F}$ from $\mathcal{V}$, at most $f$ independent paths are removed. Thus,
	it must hold that $i\in \mathcal{Z}_{\mathcal{V}_1}^{f+1}$ in $\mathcal{G}_{\mathcal{H}}$. Hence, $\mathcal{G}_{\mathcal{H}}$ is $(f+1)$-robust with $l$ hops.
	This is true for any set $\mathcal{F}$. Therefore, $(B)$ holds.

	($(B) \Rightarrow (C)$) We show that $\neg (C) \Rightarrow \neg (B)$. In a graph satisfying $\neg (C)$, for some nonempty disjoint subsets $\mathcal{V}_1, \mathcal{V}_2\subset \mathcal{V}$, at most $f$ nodes in $\mathcal{V}_1, \mathcal{V}_2$ have $f+1$ independent paths originating from the nodes outside. We choose these $f$ nodes as the set $\mathcal{F}$. As a consequence, none of the remaining nodes in $\mathcal{V}_1, \mathcal{V}_2$ has $f+1$ independent paths originating from the nodes outside. Hence this $\mathcal{G}_{\mathcal{H}}$ is not $(f + 1)$-robust with $l$ hops.
	
	($(C)\nRightarrow (B)$, $(B)\nRightarrow (A)$) We show these cases through counter examples in Fig.~\ref{difference}. 
	Suppose that the set $\mathcal{F}$ satisfies 1-local model.
	The graph in Fig.~\ref{difference}(c) is $(2,2)$-robust (satisfying $(C)$), but does not satisfy that any $\mathcal{G}_{\mathcal{H}}$ is $2$-robust where $\mathcal{H}=\mathcal{V}\setminus\mathcal{F}$ (not satisfying $(B)$). The graph in Fig.~\ref{difference}(b) satisfies that any $\mathcal{G}_{\mathcal{H}}$ is $2$-robust (satisfying $(B)$), but this graph is not $3$-robust (not satisfying $(A)$). Moreover, this graph needs one more edge to be $3$-robust as indicated in Fig.~\ref{difference}(a).
\end{proof}

\begin{figure}[t]
	\centering
	\subfigure[]{
		\includegraphics[width=0.9in]{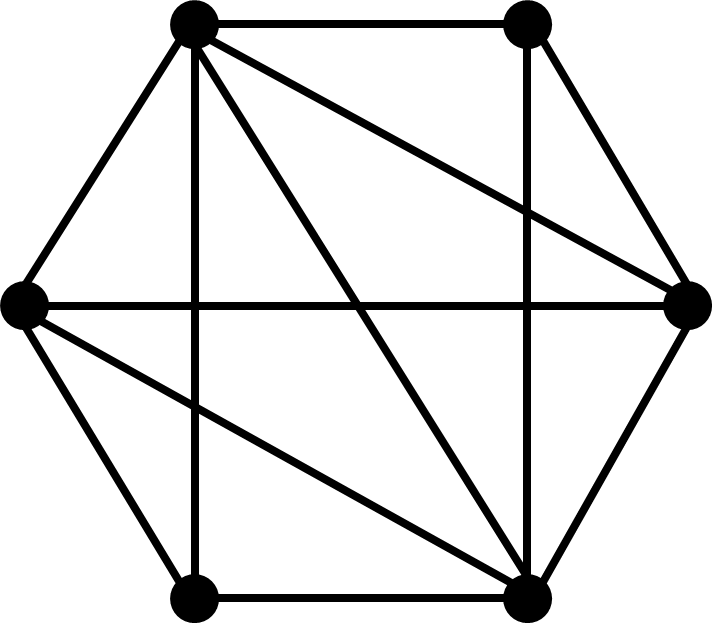}
	}
	\quad
	%\vspace{-3pt}
	\subfigure[]{
		\includegraphics[width=0.9in]{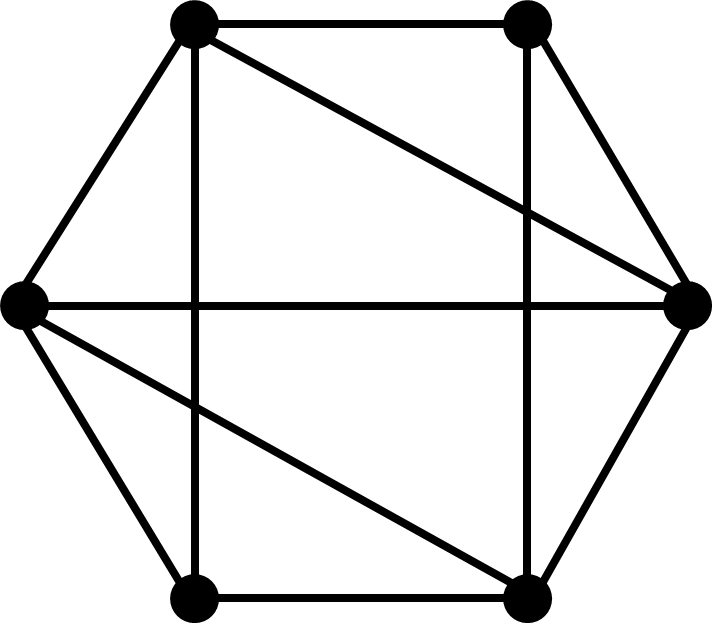}
	}
	\quad
	%\vspace{-3pt}
	\subfigure[]{
		\includegraphics[width=0.9in]{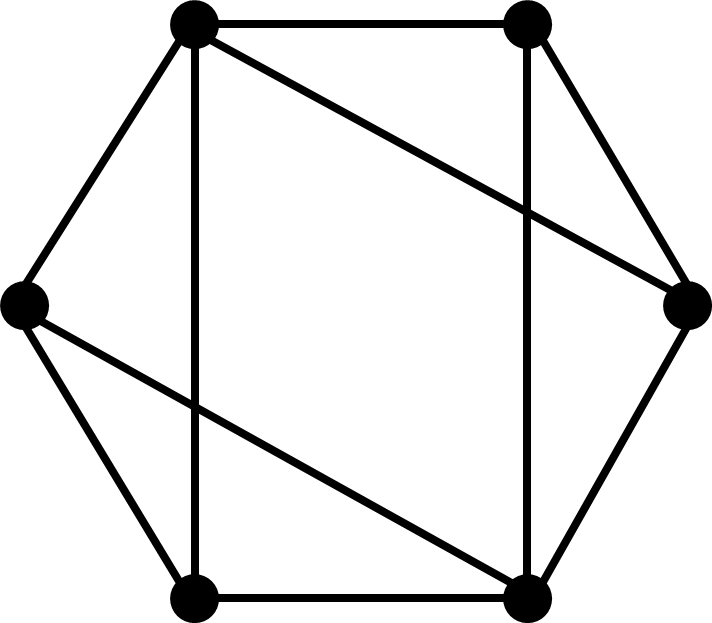}
	}
	%\vspace{-3pt}
	\caption{(a) $3$-robust. (b) $2$-strictly robust. (c) $(2,2)$-robust.}
	\label{difference}
\end{figure}

This proposition is of importance since it provides a new characterization for resilient consensus under the $f$-local malicious model. We must first recall that for synchronous update scheme, conditions $(A)$ and $(C)$ are known to be sufficient and necessary conditions, respectively, to achieve resilient consensus under the $f$-local malicious model (\cite{leblanc2013resilient, dibaji2017resilient}). 
On the other hand, we found earlier in this section that condition $(B)$ is a necessary and sufficient condition for synchronous Byzantine consensus under the $f$-local model. It is clear that the Byzantine model includes the case of malicious agents. In view of Proposition~\ref{threeconditions}, we have now established a tighter result as shown in the following corollary.

\begin{cor}\label{synlocalmalicious}
	Consider a directed graph $\mathcal{G} = (\mathcal{V},\mathcal{E})$ with $l$-hop communication, where each normal node updates its value according to the synchronous MW-MSR algorithm with parameter $f$. Under the $f$-local malicious model, resilient asymptotic consensus is achieved with safety interval $\mathcal{S}=\big[ \min x^N[0], \max x^N[0] \big]$ if $\mathcal{G}$ is $(f + 1)$-strictly robust with $l$ hops and only if $\mathcal{G}$ is $(f + 1,f+1)$-robust with $l$ hops.
\end{cor}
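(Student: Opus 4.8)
The plan is to obtain Corollary~\ref{synlocalmalicious} as an immediate consequence of two facts already in place: Proposition~\ref{syn} (which settles the Byzantine case exactly) and Proposition~\ref{threeconditions} (which orders the three graph conditions). For the \emph{sufficiency} direction, I would argue as follows. Suppose $\mathcal{G}$ is $(f+1)$-strictly robust with $l$ hops under the $f$-local model. Since the malicious threat model is strictly weaker than the Byzantine one --- a malicious node is a special kind of Byzantine node, constrained to broadcast the same value along all outgoing channels --- any execution of the synchronous MW-MSR algorithm against $f$-local malicious adversaries is also a valid execution against $f$-local Byzantine adversaries. Hence Proposition~\ref{syn} applies verbatim: resilient asymptotic consensus is achieved with the stated safety interval $\mathcal{S}=\big[\min x^N[0],\max x^N[0]\big]$. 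No new algorithmic analysis is needed here; the sufficiency is inherited.

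For the \emph{necessity} direction, I would show the contrapositive: if $\mathcal{G}$ is \emph{not} $(f+1,f+1)$-robust with $l$ hops (under the $f$-local model), then resilient consensus cannot be guaranteed under the $f$-local malicious model. This is essentially the classical two-set obstruction argument adapted to the multi-hop relay setting. Negating condition $(C)$ gives nonempty disjoint subsets $\mathcal{V}_1,\mathcal{V}_2\subset\mathcal{V}$ such that $\mathcal{Z}_{\mathcal{V}_1}^{f+1}\neq\mathcal{V}_1$, $\mathcal{Z}_{\mathcal{V}_2}^{f+1}\neq\mathcal{V}_2$, and $|\mathcal{Z}_{\mathcal{V}_1}^{f+1}|+|\mathcal{Z}_{\mathcal{V}_2}^{f+1}|\le f$ --- here, as in the proof of Proposition~\ref{threeconditions}, the relevant set $\mathcal{F}$ witnessing the ``independent path'' counts is chosen to be exactly the (at most $f$) nodes in $\mathcal{Z}_{\mathcal{V}_1}^{f+1}\cup\mathcal{Z}_{\mathcal{V}_2}^{f+1}$, and it satisfies the $f$-local model. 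Designate this $\mathcal{F}$ as the malicious set. Initialize every node in $\mathcal{V}_1$ with the value $\max x^N[0]$, every node in $\mathcal{V}_2$ with $\min x^N[0]$, and have each malicious node in $\mathcal{F}$ broadcast, at every step, a single value chosen to be larger than $\max x^N[0]$ (this value is then discarded by the construction of $\overline{\mathcal{R}}$, or, with the symmetric convention, one picks the malicious value within the safety interval but toward the opposite extreme of a target node's set). The point is that every normal node $i\in\mathcal{V}_1\setminus\mathcal{F}$ has at most $f$ internally-$\mathcal{F}$-avoiding independent paths of length $\le l$ originating outside $\mathcal{V}_1$, so the messages carrying the smaller (outside-$\mathcal{V}_1$) values form a set whose minimum message cover has cardinality at most $f$; by step~2 of Algorithm~1 node $i$ trims all of them and retains only values in $[\,\text{its own},\max x^N[0]\,]$, hence $x_i[k+1]=\max x^N[0]$. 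The symmetric statement holds on $\mathcal{V}_2$. Therefore the gap between the two groups never closes and agreement fails --- contradicting resilient asymptotic consensus.

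The one technical point deserving care is the translation between ``independent paths'' (the language of $(r,s)$-robustness, Definition~\ref{rs-robust}) and ``minimum message cover cardinality'' (the language of Algorithm~1's trimming step) in the multi-hop regime, exactly the bridge already handled in the necessity argument of Proposition~\ref{syn}. By Menger's theorem, the statement ``node $i$ has at most $f$ internally-$\mathcal{F}$-avoiding independent $\le l$-hop paths from outside $\mathcal{V}_1$'' is equivalent to the existence of a cut of size $\le f$ separating those sources from $i$ among such paths; this cut is precisely a message cover of the incoming messages carrying outside values, so its minimum message cover has cardinality $\le f$ and the trimming rule removes every such message. I would simply invoke the case analysis (cases (i)--(iii)) already carried out in the proof of Proposition~\ref{syn}, noting it is threat-model-agnostic and applies unchanged when the adversaries are malicious rather than Byzantine. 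Thus the corollary follows, and I expect the only genuine obstacle --- if any --- is stating the message-cover/independent-path equivalence cleanly enough that the reader sees no new work is required; everything else is inheritance from Propositions~\ref{syn} and~\ref{threeconditions}.
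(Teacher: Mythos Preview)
Your sufficiency argument is correct and identical to the paper's: since malicious adversaries are a special case of Byzantine ones, Proposition~\ref{syn} applies directly and the safety interval carries over.

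For necessity, the paper does not re-prove anything; it simply recalls from prior work (\cite{leblanc2013resilient}, \cite{dibaji2017resilient}, and \cite{yuan2021resilient} for the multi-hop extension) that $(f+1,f+1)$-robustness with $l$ hops is already known to be necessary under the $f$-local malicious model, and invokes Proposition~\ref{threeconditions} only to position condition $(B)$ as a tighter sufficient condition than the previously known $(A)$. Your constructive obstruction therefore goes beyond what the corollary requires.

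That construction, moreover, has a gap. Having every malicious node broadcast a single value larger than $\max x^N[0]$ pins $\mathcal{V}_1$ at the maximum, but it does \emph{not} pin $\mathcal{V}_2$ at the minimum: for a normal $j\in\mathcal{V}_2\setminus\mathcal{F}$, the set $\overline{\mathcal{M}}_j$ contains both the malicious values \emph{and} the genuine larger values relayed from normal nodes outside $\mathcal{V}_2$; the trimming rule removes only a largest-valued subset whose MMC equals $f$, so the malicious messages get discarded while the outside values survive and pull $j$ upward. Your parenthetical alternative (``toward the opposite extreme of a target node's set'') requires sending different values to different recipients, which is Byzantine, not malicious, behavior. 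The standard repair is to have each malicious node freeze at its own set's extreme --- those in $\mathcal{F}\cap\mathcal{V}_1$ hold $\max x^N[0]$, those in $\mathcal{F}\cap\mathcal{V}_2$ hold $\min x^N[0]$ --- which is compatible with broadcasting a single value and makes the obstruction symmetric. You should also take care that the set $\mathcal{F}$ appearing in Definition~\ref{rs-robust} and the adversary set you designate coincide, since the independent-path counts (and hence the MMC bound you need) are computed with respect to a specific $\mathcal{F}$; in your sketch the witness for $\neg(C)$ and the constructed adversary set are not obviously the same.
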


Note that condition $(C)$ is a necessary and sufficient condition for the $f$-total malicious model (\cite{leblanc2013resilient, dibaji2017resilient, yuan2021resilient}).

\section{Asynchronous Byzantine Consensus}\label{secasyn}

In practice, normal nodes may not be synchronized nor have access to the current values of all $l$-hop neighbors simultaneously, especially when $l$ is large.
Therefore, in this section, we analyze the asynchronous MW-MSR algorithm under the $f$-local Byzantine model and show our advantages over the conventional ones in terms of threat models and computational complexity.

Our asynchrony setting follows the approach generally assumed in fault-free consensus works (\cite{xiao2006state}; \cite{lin2009consensus}), and those considering the malicious model (e.g., \cite{dibaji2017resilient}). That is, when a normal node updates, it uses the most recently received values of its $l$-hop neighbors. 
Here, we briefly highlight how delays in asynchronous resilient consensus algorithms are handled in computer science area especially through the notion of \textit{rounds} commonly used in, e.g., \cite{azadmanesh2002asynchronous,abraham2004optimal,sakavalas2020asynchronous}.
%In contrast, the works (\cite{abraham2004optimal,sakavalas2020asynchronous,azadmanesh2002asynchronous}) from the area of computer science consider asynchronous MSR algorithms based on the notion of \textit{rounds}. 
There, each node labels its updated value with round $r$, representing the number of transmissions made so far. Moreover, if a normal node wants to update its next value with round $r+1$, it has to wait until receiving a sufficient number of values labeled with the same round $r$. This may cause potentially large delays in making the $(r + 1)$th update for some nodes. 
%Compared to the algorithm in \cite{sakavalas2020asynchronous}, our algorithm could be faster in convergence.
We note that the use of rounds can create further 
problems due to following the fixed order in the indices of rounds. 
That is, node $i$ may receive the value of round $r+1$
before the one of round $r$ from its neighbor. This may occur even along a non-faulty path.
In this case, the old data from round $r$ will be
used even though more recent data of round $r+1$ is available at the node. 
This is because the FIFO (first-in-first-out) message receiving mechanism is applied in \cite{abraham2004optimal}, \cite{sakavalas2020asynchronous}. 
%Although, in \cite{sakavalas2020asynchronous}, the FIFO message receiving is assumed in non-faulty paths only, we note that this setting actually contradicts the intuitive sense that the node $i$ may receive the value $x_j[r+1]$ of node $j$ before it receives $x_j[r]$ even along a non-faulty path. This is because the nonuniform property of delays in the network and the delay along the same path at different transmission can be different. 
However, in our asynchrony setting, these issues do not arise, and node $i$ will use the most recently received values of all neighbors whenever node $i$ chooses to update.

\subsection{Consensus Analysis}

When communication among nodes is subject to possible
time delays, we can write the control input as
\begin{equation}
	u_i[k]=\sum_{j\in \mathcal{N}_i^{l-}} a_{ij}[k]x_j^P[k-\tau_{ij}^P[k]],  
\end{equation}
where $x_j^P[k]$ denotes the value of node $j$ at time $k$ sent along path $P$, $a_{ij}[k]$ is the time-varying weight, and $\tau_{ij}^P[k]\in \mathbb{Z}_+$ denotes the delay in this $(j,i)$-path $P$ at time $k$.
The delays are time varying and may be different in each path, but we assume the common upper bound $\tau$ in any normal path $P$ (i.e., all nodes on path $P$ are normal) as
\begin{equation}
	0\leq \tau_{ij}^P[k] \leq \tau,\medspace j\in \mathcal{N}_i^{l-}, \medspace k\in \mathbb{Z}_+.
\end{equation}
Hence, each node $i \in \mathcal{N}$ becomes aware of the value
of each of its normal $l$-hop neighbor $j$ in each normal $(j,i)$-path $P$ at least once in $\tau$ time steps, but
possibly at different time instants (\cite{dibaji2017resilient}). 
%This assumption also indicates that for each normal node, the gap between two consecutive updates should be less than $\tau$. 
Note that the delay bound need not be known by normal nodes.
Finally, we outline the asynchronous MW-MSR algorithm as follows.
\begin{enumerate}
	\item At $k\geq 0$, each node $i \in \mathcal{N}$ independently chooses to update or not. 
	\item If it chooses not to update, then $x_i[k+1]=x_i[k]$ and it does not transmit its own message.
	
	\item Otherwise, it will use the most recently received values of $\forall j\in \mathcal{N}_i^{l-}$ on each $l$-hop path to update its value following steps 2 and 3 in Algorithm~1. Then it transmits its new message to $\forall j\in \mathcal{N}_i^{l+}$.
\end{enumerate}

If node $i$ does not receive any value along some path $P$ originating from $j\in \mathcal{N}_i^{l-}$ (i.e., the crash model), then it considers this value on path $P$ as one empty value and discards this value when it applies Algorithm~1.
%As we discussed earlier, in the asynchronous case also, manipulation in message path information can be handled by normal nodes. 

To proceed with our analysis, we introduce some notations. Let $D[k]$ be a diagonal matrix whose $i$th entry is
given by $d_i[k]=\sum_{j=1}^{n} a_{ij}[k].$ Then,
let the matrices $A_\gamma[k]\in \mathbb{R}^{n\times n}$ for $ 0\leq \gamma \leq \tau$, and $L_{\tau}[k]\in \mathbb{R}^{n\times (\tau +1)n}$ be
\begin{equation}   
	A_\gamma[k]=
	\begin{cases}
		a_{ij}[k]  &  \text{if $i\neq j $ and $\tau_{ij}[k]=\gamma$,} \\
		0 &  \text{otherwise,}
	\end{cases}               
\end{equation}
and $L_{\tau}[k]=\Big[ D[k]-A_0[k] \medspace \medspace\medspace\medspace -A_1[k] \medspace \medspace\medspace\medspace \cdots \medspace\medspace \medspace\medspace -A_{\tau}[k] \Big].$
Now, the control input can be expressed as
\begin{equation}
	\begin{array}{lll} 
		u^N[k] =-L_{\tau}^N[k]z[k],\\
		u^A[k] : \textup{arbitrary,}
	\end{array}
\end{equation}
where $z[k]= [x[k]^T x[k-1]^T \cdots  x[k-\tau]^T]^T$ is a $(\tau+1)n$-dimensional vector for $k\geq0$ and $L_{\tau}^N[k]$ is a matrix formed by the first $n_N$ rows of $L_{\tau}[k]$.
Here, $z[0] = [x[0]^T 0^T \cdots 0^T]^T$.
Then, the agent dynamics can be written as
\begin{equation} \label{system2}
	x[k+1]=\Gamma[k] z[k] +   \begin{bmatrix} 0  \\ I_{n_A} \end{bmatrix}u^A[k],
\end{equation}
where $\Gamma[k]$ is an  $n\times(\tau+1)n$ matrix given by $\Gamma[k] = \begin{bmatrix} I_n &  0 \end{bmatrix} -   \begin{bmatrix} L_{\tau}^N[k]^T  & 0 \end{bmatrix}^T. $
The safety interval is given by 
\begin{equation}  \label{safety2}
	\mathcal{S}_{\tau}=\Big[ \min z^N[0], \max z^N[0] \Big].            	
\end{equation}
The following is the main result of this paper.
It provides a necessary and sufficient condition for the asynchronous MW-MSR algorithm achieving Byzantine consensus.

\begin{thm}\label{asyntheorem}
	Consider a directed graph $\mathcal{G} = (\mathcal{V},\mathcal{E})$ with $l$-hop communication, where each normal node updates its value according to the asynchronous MW-MSR algorithm with parameter
	$f$. Under the $f$-local Byzantine model for the adversarial nodes, resilient asymptotic
	consensus is achieved with the safety interval given by \eqref{safety2}
	if and only if $\mathcal{G}$ is $(f + 1)$-strictly robust with $l$ hops.
\end{thm}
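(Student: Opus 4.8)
The plan is to prove the two directions separately, leveraging the structure already established for the synchronous case in Proposition~\ref{syn}. For \textbf{necessity}, the construction is essentially identical to the necessity argument of Proposition~\ref{syn}: if $\mathcal{G}$ fails to be $(f+1)$-strictly robust with $l$ hops, pick the witnessing $f$-local set $\mathcal{F}$, let it be the Byzantine set, and find disjoint nonempty $\mathcal{V}_1,\mathcal{V}_2\subset\mathcal{N}$ such that every node in either set has at most $f$ independent $l$-hop paths from normal nodes outside its set. Initialize $\mathcal{V}_1$ at the maximum and $\mathcal{V}_2$ at the minimum of the normal initial values, have the Byzantine nodes feed each boundary set its own extreme value along every path, and run the three-case minimum-message-cover argument (all incoming outside-neighbors direct; all strictly multi-hop; mixed) to conclude that each such node discards all values from outside and keeps its own. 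Asynchrony and delays only help the adversary here, so the same construction defeats the asynchronous algorithm; consensus never occurs.

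For \textbf{sufficiency}, I would follow the standard MSR Lyapunov-type argument adapted to the delayed/asynchronous matrix model \eqref{system2}. Define $\overline{x}[k]=\max z^N[k]$ and $\underline{x}[k]=\min z^N[k]$ over the $(\tau+1)n_N$-window of recent normal states, and show (i) \emph{safety}: $\overline{x}[k]$ is non-increasing and $\underline{x}[k]$ non-decreasing, so all normal states stay in $\mathcal{S}_\tau$ from \eqref{safety2} --- this uses the key trimming property that for each normal $i$, after removing $\overline{\mathcal{R}}_i[k]$ and $\underline{\mathcal{R}}_i[k]$ (each with MMC cardinality at most $f$ under the $f$-local model), any retained value that exceeds $x_i[k]$ must come from a normal node, because an $f$-sized cover cannot block all of a set whose adversarial contribution is coverable by $\le f$ nodes; together with the fact that node $i$ always keeps its own value and the weights are lower-bounded by $\alpha$ (or $1/(\tau+1)n$ after accounting for the $D[k]$ normalization), this gives a uniform contraction constant. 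Then (ii) \emph{convergence}: fix $\epsilon>0$, and track over successive blocks of $\tau+1$ (or more) time steps the set of normal nodes whose state is within $\epsilon$-scaled distance of $\overline{x}$; using $(f+1)$-strict robustness applied to the normal network $\mathcal{G}_\mathcal{N}$ (which is guaranteed $(f+1)$-robust with $l$ hops for the actual adversary set, as noted after Proposition~\ref{syn}), show this "extreme" set strictly shrinks: some boundary normal node has $f+1$ independent $l$-hop normal paths from outside the set, hence after trimming at most $f$ it necessarily retains at least one value bounded away from $\overline{x}$, forcing its next update to drop by a fixed fraction of $\overline{x}[k]-\underline{x}[k]$. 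Iterating over at most $n_N$ such shrinkage steps (each costing a bounded number of time steps because of the delay bound $\tau$ on normal paths) yields $\overline{x}[k]-\underline{x}[k]\to 0$ geometrically, establishing \emph{agreement}.

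The \textbf{main obstacle} is the convergence bookkeeping under delays combined with multi-hop relaying. In the one-hop synchronous MSR proof one argues about the set $\mathcal{X}_\epsilon[k]$ of nodes near the current maximum in a single step; here a node updating at time $k$ uses values $x_j^P[k-\tau_{ij}^P[k]]$ that were generated up to $\tau$ steps in the past and relayed through possibly-adversarial intermediate nodes, so "the value entering node $i$ is normal and bounded away from $\overline{x}$" must be phrased against the enlarged window state $z^N[\cdot]$ and against the strict-robustness set $\mathcal{F}$ rather than the literal adversary set. The care needed is: (a) an independent $l$-hop path from outside that avoids $\mathcal{F}$ as internal nodes carries the \emph{true} value of a normal source, possibly stale by at most $\tau$, so one must show staleness only weakens the bound by a controllable amount since $\overline{x}$ is monotone; (b) the MMC cardinality $f$ bound correctly accounts for the fact that a single direct neighbor of $i$ may forward many values, yet still $\le f$ well-chosen nodes cover all adversarially-influenced messages, which is exactly the $(f+1)$-strict robustness hypothesis ensuring a surviving good path. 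I would isolate these as one lemma ("the trimming step of the asynchronous MW-MSR algorithm retains, for each normal $i$ and each sufficiently-shrunk extreme set, at least one normal value from outside the set, stale by at most $\tau$"), after which the geometric-decay induction is routine and parallels \cite{yuan2021resilient}.
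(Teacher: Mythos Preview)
Your proposal is correct and follows essentially the same approach as the paper: necessity reuses the construction from Proposition~\ref{syn} verbatim, and sufficiency runs the windowed max/min monotonicity argument followed by a set-shrinkage step driven by $(f+1)$-strict robustness, exactly as the paper does. The only cosmetic difference is that the paper phrases the convergence as a proof by contradiction (assume the limits $\overline{x}_\tau^*>\underline{x}_\tau^*$ differ, build the decreasing $\epsilon_\gamma$-sequence, and empty one of the extreme sets in at most $(\tau+1)n_N$ steps), whereas you phrase the same mechanism as a direct geometric-decay induction; the ``main obstacle'' you isolate about stale values on normal paths and the MMC covering argument is precisely what the paper handles inline via the bound $x_i[k_\epsilon+\tau+1]\le\overline{x}_\tau^*-\epsilon_{\tau+1}$.
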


%See the extended version (\cite{yuan2022asynchronous}) for the proof. The key of sufficiency is to prove that in $\mathcal{G}_{\mathcal{N}}$, there is at least one node updates its value towards a normal value outside its corresponding set for every $\tau$ steps.

\begin{proof}
	The necessity part follows from Proposition \ref{syn}. For sufficiency, we first show that the safety condition holds. For $k = 0$, by \eqref{safety2}, we have $x_i[0] \in \mathcal{S}_{\tau} ,\forall i \in \mathcal{N}$. For $k = 1$, $\forall i \in \mathcal{N}$, the right-hand side of \eqref{system2} becomes convex combinations of values in the interval  $[\min z^N[0], \max z^N[0]]=\mathcal{S}_{\tau}$. Thus, $x_i[1] \in \mathcal{S}_{\tau} ,\forall i \in \mathcal{N}$.
	Next, for $k \geq 1$, define two variables by
	\begin{equation}
		\begin{array}{lll} 
			\overline{x}_\tau[k] =\max \left( x^N[k], x^N[k-1],\dots, x^N[k-\tau]\right),\\
			\underline{x}_\tau[k] = \min \left( x^N[k], x^N[k-1],\dots, x^N[k-\tau]\right).
		\end{array}
	\end{equation}
	For $k \geq 2$, from step 2 of Algorithm~1, we obtain $x_i[k+1] \leq \max \left( x^N[k], x^N[k-1],\dots, x^N[k-\tau]\right), \forall i \in \mathcal{N}$. Also, for $\tau'=1,2,\dots,\tau$, it holds that
	\begin{equation*}
		x_i[k+1-\tau']\leq \max \left( x^N[k], x^N[k-1],\dots, x^N[k-\tau]\right),
	\end{equation*}
	$\forall i \in \mathcal{N}$. Hence, $\overline{x}_\tau[k]$ is nonincreasing in time as
	\begin{equation*}
		\begin{aligned}
			\overline{x}_\tau&[k+1] =\max \left( x^N[k+1], x^N[k],\dots, x^N[k+1-\tau]\right)\\
			&\leq \max \left( x^N[k], x^N[k-1],\dots, x^N[k-\tau]\right)=\overline{x}_\tau[k].
		\end{aligned}
	\end{equation*}
	We can similarly prove that $\underline{x}_\tau[k]$ is nondecreasing in time. Thus, we have shown the safety condition.
	
	From above, $\overline{x}_\tau[k]$ and $\underline{x}_\tau[k]$ are monotone and bounded, and thus both of their limits exist and are denoted by $\overline{x}_\tau^*$ and $\underline{x}_\tau^*$, respectively.
	We prove by contradiction that $\overline{x}_\tau^*=\underline{x}_\tau^*$. Assume that $\overline{x}_\tau^*>\underline{x}_\tau^*$
	and $\alpha$ lower bounds the nonzero entries of $\Gamma[k]$. Choose $\epsilon_0 > 0$ small enough that $\overline{x}_\tau^*-\epsilon_0 > \underline{x}_\tau^*+\epsilon_0$. Fix 
	\begin{equation}\label{ep}
		\epsilon<\frac{\epsilon_0\alpha^{(\tau+1)n_N}}{(1-\alpha^{(\tau+1)n_N})}, \medspace 0<\epsilon<\epsilon_0.
	\end{equation}
	Define the sequence $\{\epsilon_\gamma\}$ by
	$\epsilon_{\gamma+1}= \alpha\epsilon_\gamma-(1-\alpha)\epsilon, \medspace \gamma=0,1,\dots,(\tau+1)n_N-1.$
	So we have $0 < \epsilon_{\gamma+1} < \epsilon_{\gamma}$ for all $\gamma$. In particular, they are positive because by \eqref{ep},
	\begin{equation*}
		\begin{aligned}
			\epsilon_{(\tau+1)n_N}&= \alpha^{(\tau+1)n_N}\epsilon_0- \sum_{m=0}^{(\tau+1)n_N-1}\alpha^m(1-\alpha)\epsilon\\
			&= \alpha^{(\tau+1)n_N}\epsilon_0-(1-\alpha^{(\tau+1)n_N})\epsilon>0.
		\end{aligned}
	\end{equation*}
	Take $k_\epsilon \in \mathbb{Z}_+$ such that $\overline{x}_\tau[k]<\overline{x}_\tau^*+\epsilon$ and $\underline{x}_\tau[k]>\underline{x}_\tau^*-\epsilon$ for $k\geq k_\epsilon$. Such $k_\epsilon$ exists due to the convergence of $\overline{x}_\tau[k]$ and $\underline{x}_\tau[k]$. Then we can define the two disjoint sets as 
	\begin{equation*}
		\begin{aligned}
			\mathcal{Z}_{1\tau}(k_\epsilon+\gamma,\epsilon_\gamma)&=\{j\in \mathcal{N}: x_j[k_\epsilon+\gamma]>\overline{x}_\tau^*-\epsilon_\gamma\}, \\
			\mathcal{Z}_{2\tau}(k_\epsilon+\gamma,\epsilon_\gamma)&=\{j\in \mathcal{N}: x_j[k_\epsilon+\gamma]<\underline{x}_\tau^*+\epsilon_\gamma\}.
		\end{aligned}
	\end{equation*}
	We show that one of them becomes empty in finite steps, which contradicts the assumption on $\overline{x}_\tau^*$ and $\underline{x}_\tau^*$ being the limits. Consider $\mathcal{Z}_{1\tau}(k_\epsilon,\epsilon_0)$. Due to the definition of $\overline{x}_\tau[k]$ and its limit $\overline{x}_\tau^*$, one or more normal nodes are in the union of the sets $\mathcal{Z}_{1\tau}(k_\epsilon+\gamma,\epsilon_\gamma)$ for $0 \leq\gamma\leq\tau + 1$. We claim that $\mathcal{Z}_{1\tau}(k_\epsilon,\epsilon_0)$ is in fact nonempty. To prove this, it is sufficient to show that if a normal node $j$ is not in $\mathcal{Z}_{1\tau}(k_\epsilon+\gamma,\epsilon_\gamma)$, then it is not in $\mathcal{Z}_{1\tau}(k_\epsilon+\gamma+1,\epsilon_{\gamma+1})$ for $\gamma=0,\dots,\tau$.
	Suppose that node $j$ satisfies $x_j[k_\epsilon+\gamma]\leq \overline{x}_\tau^*-\epsilon_\gamma$. The values greater than $\overline{x}_\tau[k_\epsilon+\gamma]$ are ignored in step 2 of Algorithm~1. Thus, its next value is bounded as
	\begin{equation}\label{converge}
		\begin{aligned}
			x_j&[k_\epsilon+\gamma+1] \leq (1-\alpha)\overline{x}_\tau[k_\epsilon+\gamma]+\alpha(\overline{x}_\tau^*-\epsilon_\gamma)\\
			&\leq (1-\alpha)(\overline{x}_\tau^*+\epsilon)+\alpha(\overline{x}_\tau^*-\epsilon_\gamma)\\
			&\leq \overline{x}_\tau^*-\alpha\epsilon_\gamma+(1-\alpha)\epsilon
			=\overline{x}_\tau^*-\epsilon_{\gamma+1}.
		\end{aligned}
	\end{equation}
	Thus, node $j$ is not in $\mathcal{Z}_{1\tau}(k_\epsilon+\gamma+1,\epsilon_{\gamma+1})$. Then, $|\mathcal{Z}_{1\tau}(k_\epsilon+\gamma,\epsilon_\gamma)|$ is nonincreasing for $\gamma=0,\dots,\tau+1$. Similarly, $\mathcal{Z}_{2\tau}(k_\epsilon,\epsilon_0)$ is nonempty too.

	Since $\mathcal{G}$ is $(f + 1)$-strictly robust with $l$ hops under the $f$-local model, $\mathcal{G}_{\mathcal{N}}$ must be $(f + 1)$-robust with $l$ hops w.r.t. $\mathcal{A}$.
	Thus, $\exists i \in \mathcal{V}_a$, such that $ i \in \mathcal{Y}_{\mathcal{V}_a}^{f+1}$ in Definition~\ref{rs-robust}, where $\mathcal{V}_a $ is one of the nonempty disjoint sets $\mathcal{Z}_{1\tau}(k_\epsilon,\epsilon_0)$ and $\mathcal{Z}_{2\tau}(k_\epsilon,\epsilon_0)$.
	Suppose that $ i \in \mathcal{Y}_{\mathcal{V}_a}^{f+1}$ and $\mathcal{V}_a =  \mathcal{Z}_{1\tau}(k_\epsilon,\epsilon_0)$.
	By the argument above, node $i$'s normal neighbors outside $\mathcal{Z}_{1\tau}(k_\epsilon,\epsilon_0)$ will not be in $\mathcal{Z}_{1\tau}(k_\epsilon+\gamma,\epsilon_\gamma)$ for $0\leq\gamma\leq\tau$. By step 2 of Algorithm~1, one value of these neighbors upper bounded by $\overline{x}_\tau^*-\epsilon_\tau$ will be used in the updates of node $i$ at any time (e.g., at time $k_\epsilon+\tau$) since node $i$ can only remove the smallest values of which the cardinality of the MMC is $f$. Thus,
	\begin{equation*}
		x_i[k_\epsilon+\tau+1]\leq (1-\alpha)\overline{x}_\tau[k_\epsilon+\tau]+\alpha(\overline{x}_\tau^*-\epsilon_\tau).
	\end{equation*}
	By \eqref{converge}, we have $x_i[k_\epsilon+\tau+1] \leq \overline{x}_\tau^*-\epsilon_{\tau+1}$. If $ \mathcal{V}_a =  \mathcal{Z}_{1\tau}(k_\epsilon,\epsilon_0)$, then node $i$ goes outside of $\mathcal{Z}_{1\tau}(k_\epsilon+\tau+1,\epsilon_{\tau+1})$ after $\tau + 1$ steps. Consequently, $\left|\mathcal{Z}_{1\tau}(k_\epsilon+\tau+1,\epsilon_{\tau+1}) \right| < \left| \mathcal{Z}_{1\tau}(k_\epsilon,\epsilon_0)\right| $. Likewise, if $\mathcal{V}_a =  \mathcal{Z}_{2\tau}(k_\epsilon,\epsilon_0)$, then $\left|\mathcal{Z}_{2\tau}(k_\epsilon+\tau+1,\epsilon_{\tau+1}) \right| < \left| \mathcal{Z}_{2\tau}(k_\epsilon,\epsilon_0)\right| $.
	Since $|\mathcal{N}|=n_N$, we can repeat the steps above until one of $\mathcal{Z}_{1\tau}(k_\epsilon+\tau+1,\epsilon_{\tau+1})$ and $\mathcal{Z}_{2\tau}(k_\epsilon+\tau+1,\epsilon_{\tau+1})$ remains empty indefinitely, and it takes no more than $(\tau + 1)n_N$ steps. This contradicts the assumption that $\overline{x}_\tau^*$ and $\underline{x}_\tau^*$ are the limits. Therefore, we obtain $\overline{x}_\tau^*=\underline{x}_\tau^*$.
\end{proof}

\subsection{Comparison with Conventional Methods}

In this part, we outline our advantages over the conventional works. They are highlighted in the following four aspects: 
(i) Our algorithm does not use ``rounds'' that can cause possibly large delays in consensus forming;
(ii) we consider the $f$-local model;
(iii) our graph condition is tight and generalizes the ones in the literature for both synchronous and asynchronous cases;
(iv) the algorithm is computationally more efficient.

\subsubsection{Advantages in Threat Models and Graph Conditions}\label{sec_advantages}

In what follows, we discuss further details about these advantages.
Specifically, the $f$-total model in \cite{su2017reaching}, \cite{sakavalas2020asynchronous} can be viewed as a special case of the $f$-local model, and thus the condition stated in Theorem \ref{asyntheorem} is also sufficient for the $f$-total Byzantine model.
We emphasize that the $f$-local model is more suitable for a large scale network because it locally focuses on each node with a small $f$-total model. If the locations of adversary nodes are spread in a more uniform way over the network, then the total tolerable number of adversaries can be very large. However, with the same number of adversary nodes, the $f$-total model requires much more connections in the network. See the example in Fig.~\ref{1lcoal}(b) and the simulation in Section~\ref{sec_sim}.

Observe that the condition in Theorem \ref{asyntheorem} is the same for the synchronous case in Section~\ref{secsyn} and \cite{su2017reaching}, which indicates that it makes the system sufficiently resilient to the influence of asynchrony and communication delays. 
It also appeared in \cite{tseng2015fault}, \cite{sakavalas2020asynchronous} for synchronous and asynchronous schemes, respectively, which study the special case of unbounded path length $l\geq l^*$, where $l^*$ is the length of the longest cycle-free path in the network.

Similar to the discussion in Section~\ref{secsyn2}, based on Theorem~\ref{asyntheorem}, we can obtain a tight result for the resilient consensus under the $f$-total/local malicious model for the asynchronous update scheme. For this case, it is known that $(2f +1)$-robustness with $l$ hops is a sufficient condition (see Table~\ref{table1}) while $(f +1,f+1)$-robustness with $l$ hops is a necessary condition; see, e.g., \cite{leblanc2013resilient}, \cite{dibaji2017resilient} for the one-hop case and \cite{yuan2021resilient} for the multi-hop case. Therefore, the following corollary gives a tighter graph condition for asynchronous resilient consensus under the malicious model in view of Proposition~\ref{threeconditions} and Theorem~\ref{asyntheorem}.

\begin{cor}\label{asynlocalmalicious}
	Consider a directed graph $\mathcal{G} = (\mathcal{V},\mathcal{E})$ with $l$-hop communication, where each normal node updates its value according to the asynchronous MW-MSR algorithm with parameter $f$. Under the $f$-local/total malicious model, resilient asymptotic consensus is achieved with safety interval \eqref{safety2} if $\mathcal{G}$ is $(f + 1)$-strictly robust with $l$ hops and only if $\mathcal{G}$ is $(f + 1,f+1)$-robust with $l$ hops (under the corresponding models of f-local/total).
\end{cor}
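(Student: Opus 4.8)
The plan is to obtain Corollary~\ref{asynlocalmalicious} as an immediate consequence of the three results already established earlier in the paper, without any new consensus analysis of its own. For the sufficiency direction, I would argue as follows. Suppose $\mathcal{G}$ is $(f+1)$-strictly robust with $l$ hops under the $f$-local (resp.\ $f$-total) model. By Proposition~\ref{threeconditions}, condition $(B)$ implies condition $(C)$, but more to the point we need to run the algorithm, so I invoke Theorem~\ref{asyntheorem}: since $(f+1)$-strict robustness with $l$ hops is the sufficient (indeed necessary and sufficient) condition for the asynchronous MW-MSR algorithm to achieve resilient asymptotic consensus under the $f$-local \emph{Byzantine} model with safety interval~\eqref{safety2}, and since the malicious model is a strictly weaker adversary model than the Byzantine model (malicious nodes are constrained to send identical values and identical relayed values to all neighbours, whereas Byzantine nodes are not), the same algorithm under the same graph condition a fortiori achieves resilient asymptotic consensus under the $f$-local malicious model with the same safety interval. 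The $f$-total case follows because an $f$-total set is in particular an $f$-local set, so the $f$-total malicious model is a special case of the $f$-local Byzantine model; alternatively one appeals to the $f$-total versions of Proposition~\ref{syn}/Theorem~\ref{asyntheorem}.

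For the necessity direction, I would cite the known converse results from the literature for the malicious model directly: for the asynchronous (and already for the synchronous) update scheme, it is known that $(f+1,f+1)$-robustness with $l$ hops is necessary for resilient asymptotic consensus under the $f$-total/local malicious model --- this is established in \cite{leblanc2013resilient}, \cite{dibaji2017resilient} for the one-hop case and in \cite{yuan2021resilient} for the multi-hop case, as recalled in the paragraph preceding the corollary and in Table~\ref{table1}. Hence, if $\mathcal{G}$ fails to be $(f+1,f+1)$-robust with $l$ hops, then resilient consensus under the malicious model cannot be guaranteed by \emph{any} MSR-type algorithm, in particular not by asynchronous MW-MSR. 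This gives the "only if" clause exactly as stated.

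The only genuine content to spell out is the logical chain $(A)\Rightarrow(B)\Rightarrow(C)$ versus what is actually proved, so that the reader sees why $(B)$ sits strictly between the classical sufficient condition $(A)$ and the classical necessary condition $(C)$ and is therefore a strict improvement on the former while remaining compatible with the latter. That comparison is precisely the statement of Proposition~\ref{threeconditions}, and the fact that the gap is genuine (i.e.\ $(C)\nRightarrow(B)$ and $(B)\nRightarrow(A)$) is witnessed by the examples in Fig.~\ref{difference}. So the proof is really a two-line assembly: sufficiency from Theorem~\ref{asyntheorem} plus the Byzantine-dominates-malicious observation, and necessity from the cited malicious-model converse.

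I do not anticipate a serious obstacle here; the one point requiring a little care is to make explicit that Theorem~\ref{asyntheorem} is stated for the Byzantine model whereas the corollary concerns the malicious model, so one must invoke the inclusion of adversary classes (every malicious behaviour is a legal Byzantine behaviour, and an $f$-total adversary set is $f$-local) rather than apply Theorem~\ref{asyntheorem} verbatim. Also, the safety interval claimed is the same~\eqref{safety2}, which is inherited directly because the safety argument in the proof of Theorem~\ref{asyntheorem} only uses that the weights $\Gamma[k]$ produce convex combinations of normal values in the window $[\min z^N[0],\max z^N[0]]$ --- a property that holds regardless of whether the adversaries are malicious or Byzantine. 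Hence no re-derivation of safety is needed.
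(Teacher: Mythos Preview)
Your proposal is correct and follows exactly the route the paper itself takes: the corollary is presented there without a separate proof, merely ``in view of Proposition~\ref{threeconditions} and Theorem~\ref{asyntheorem}'' together with the known necessary condition $(f+1,f+1)$-robustness from \cite{leblanc2013resilient,dibaji2017resilient,yuan2021resilient}. Your explicit justification of the adversary-class inclusion (malicious $\subset$ Byzantine, $f$-total $\subset$ $f$-local) and the remark that the safety argument carries over unchanged are precisely the points one would spell out if asked to expand the paper's two-line derivation.
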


\subsubsection{Advantages in Computational Complexity}
Finally, we highlight that our MW-MSR algorithm is more light weighted and efficient in terms of computational complexity in comparison with the flooding-based algorithm in \cite{sakavalas2020asynchronous}.
%The asynchrony settings of the two algorithms are similar, but we exploit the power of the MW-MSR algorithm to filter the possible extreme values caused by Byzantine nodes.

To show this, we first outline the structure of the algorithm in \cite{sakavalas2020asynchronous}.
Intuitively, the algorithm there can be divided into two parts: Verification of the received values and the MSR algorithm (called Filter and Average algorithm).
%The verification part is to guarantee that different destination nodes receive the correct and unique value of the source node by checking the consistency of the messages in the paths which excludes the suspicious set of Byzantine nodes. Then the Filter and Average algorithm is used to satisfy the safety condition by excluding the extreme source values. 
More specifically, each node is required to send its value to the entire network at the beginning of each asynchronous round. Then in the verification part, for each possible set of Byzantine nodes $\mathcal{F}$ (satisfying the $f$-total model), each normal node $i$ receives values from the neighbors and for each received value, it verifies if this value is consistent in the paths excluding the nodes in set $\mathcal{F}$. Then node $i$ has to wait for enough verified values with round $r$ as the input for the Filter-and-Average part to obtain its new value.
%In the worst case, node $i$ has to wait until it receives all the values and make sure each value of any source node are consistent along the paths which excludes the nodes of the true set of Byzantine nodes.
%There, each normal node updates its value using these inputs. To satisfy the safety condition, the normal nodes will remove the largest and smallest values of which the message cover is $f$.

The Filter-and-Average algorithm and our MW-MSR algorithm are similar, but the main difference is that 
the former algorithm uses verified values with round $r$ as inputs and the MW-MSR algorithm uses the most recent values of $l$-hop neighbors on each $l$-hop path. Hence, all the operations before the Filter-and-Average algorithm in the main algorithm for verification in \cite{sakavalas2020asynchronous} are additional in terms of computation. 
Besides, the verification algorithm there should be executed for each possible set $\mathcal{F}$, i.e., at least $\binom{n}{f}$ executions of the main algorithm on each node for each asynchronous round. 
Although this can be
executed in parallel threads (one $\mathcal{F}$ per thread), it still requires a huge amount of computation resources and memory to verify and store the values from the nodes in the entire network.
Even for the case of $l\geq l^*$, the computational complexity of the MW-MSR algorithm is less than the algorithm in \cite{sakavalas2020asynchronous}.

Why the verification part is essential in \cite{sakavalas2020asynchronous} is partially because of their asynchrony setting based on rounds and the verification can prevent the duplication of messages of normal nodes with the same round $r$. 
%Also, the verification part guarantees that there is always a source of common influence for any pair of nodes at each round $r$, which is crucial in their proof techniques for consensus.
In contrast, in our asynchrony setting, we need not check the correctness of the received values and we simply use the most recent value for each $l$-hop path (hence, no duplication). Thus, we can fully utilize the ability of MW-MSR algorithm to filter the extreme values that could possibly be manipulated by Byzantine nodes. 
The trade-off is that we can only guarantee $\Delta x_{\tau}[k]=\max z^N[k]- \min z^N[k]$ to be nonincreasing, while for the round based asynchrony, $\Delta x[r]=\max x^N[r]- \min x^N[r]$ is guaranteed to be nonincreasing.
Besides, since our algorithm is iterative and only requires values and topology information up to $l$ hops away, our algorithm is more distributed compared to that in \cite{sakavalas2020asynchronous}.

\section{Numerical Examples}\label{sec_sim}

In this section, we carry out simulations to illustrate the efficacy of the proposed MW-MSR algorithms. Through the example, we also demonstrate our advantages in tolerating more Byzantine agents under the same network setting compared to the flooding-based algorithm.

\subsection{Simulation in a Small Network: Larger Relay Range Improves Strict Robustness}\label{sec_small}

Consider the 7-node network in Fig.~\ref{1lcoal}(a). This graph is not $2$-strictly robust with one hop, but is $2$-strictly robust with 2 hops. Suppose that node 7 is Byzantine and is capable to send six different values to its six neighbors.
Let the initial normal states be $x^N[0]=[1\ 2\ 4\ 9\ 8\ 9]^T$. 
We start with the synchronous case. For this case, according to \cite{vaidya2012iterative}, \cite{leblanc2013resilient}, the current graph does not meet the condition for $1$-total Byzantine model. As shown in Fig.~\ref{one-hop-state}, consensus among normal nodes cannot be reached in this network with one-hop communication.
Here, the Byzantine node 7 transmits six different values indicated by red dashed lines in Fig.~\ref{one-hop-state}(a).

\begin{figure}[t]
	\centering
		\includegraphics[width=3.2in,height=1.5in]{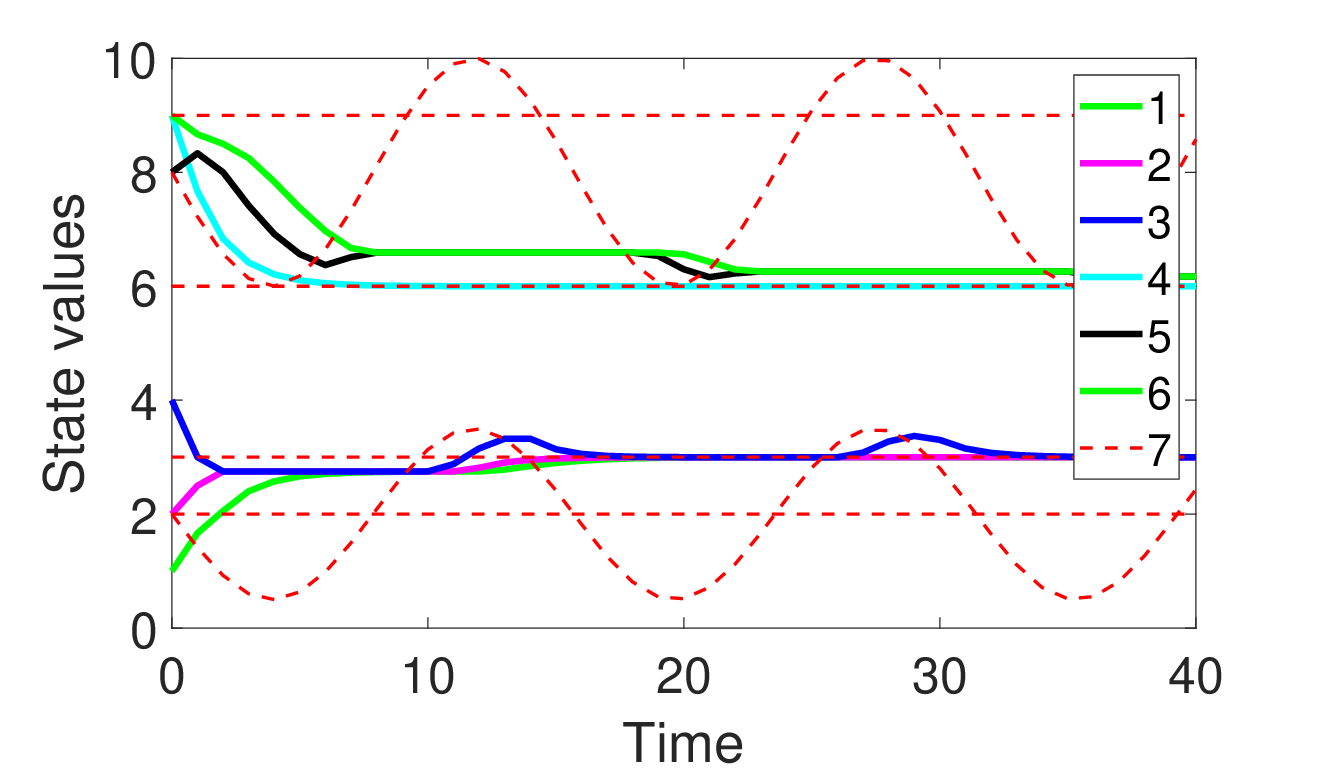}
	\vspace{-8pt}
	\caption{Time responses of the synchronous one-hop W-MSR algorithm in the 7-node network of Fig.~\ref{1lcoal}(a).}
	\label{one-hop-state}
\end{figure}

Then, we examine the synchronous two-hop MW-MSR algorithm in this network. Suppose that node 7 manipulates all the values (including its own value and the relayed values) sent to node 1 as the same value sent to node 1 in the one-hop case. For the other nodes receiving values from node 7, the situations are similar. Observe in Fig.~\ref{two-hop-7}(a) that Byzantine consensus is indeed achieved with two-hop communication.

Next, we perform simulations for the asynchronous two-hop MW-MSR algorithm under the same attack. Let the normal nodes update in an asynchronous periodic sense, which means that for nodes 1, 2, 3, 4, 5, and 6, they update in every 1, 2, 5, 6, 4, 3 steps, respectively (all nodes update once at $k=0$). The time delays for the values from one-hop neighbors and two-hop neighbors are set as 0 and 1 step, respectively. Thus, in the current setting, we can choose $\tau=7$.
The results of the asynchronous two-hop algorithm are presented in Fig.~\ref{two-hop-7}(b).
Observe that Byzantine consensus is achieved although delays have some effects and the convergence takes more time than the synchronous algorithm. We can also notice that the consensus error for $z[k]$, i.e., $\Delta x_{\tau}[k]=\max z^N[k]- \min z^N[k]$ is nonincreasing while $\Delta x_{0}[k]$ is not. This observation also verifies the theoretical results in Theorem~\ref{asyntheorem}. We finally note that the flooding algorithm in \cite{sakavalas2020asynchronous} can achieve asynchronous Byzantine consensus in this network. However, it is achieved with 6-hop communication; this is the length of the longest cycle-free path in this network, required for the flooding-based approach.

%\begin{figure}[t]
%	\centering
%	\subfigure[\scriptsize{The states of all nodes.}]{
%		\includegraphics[width=3.2in,height=1.5in]{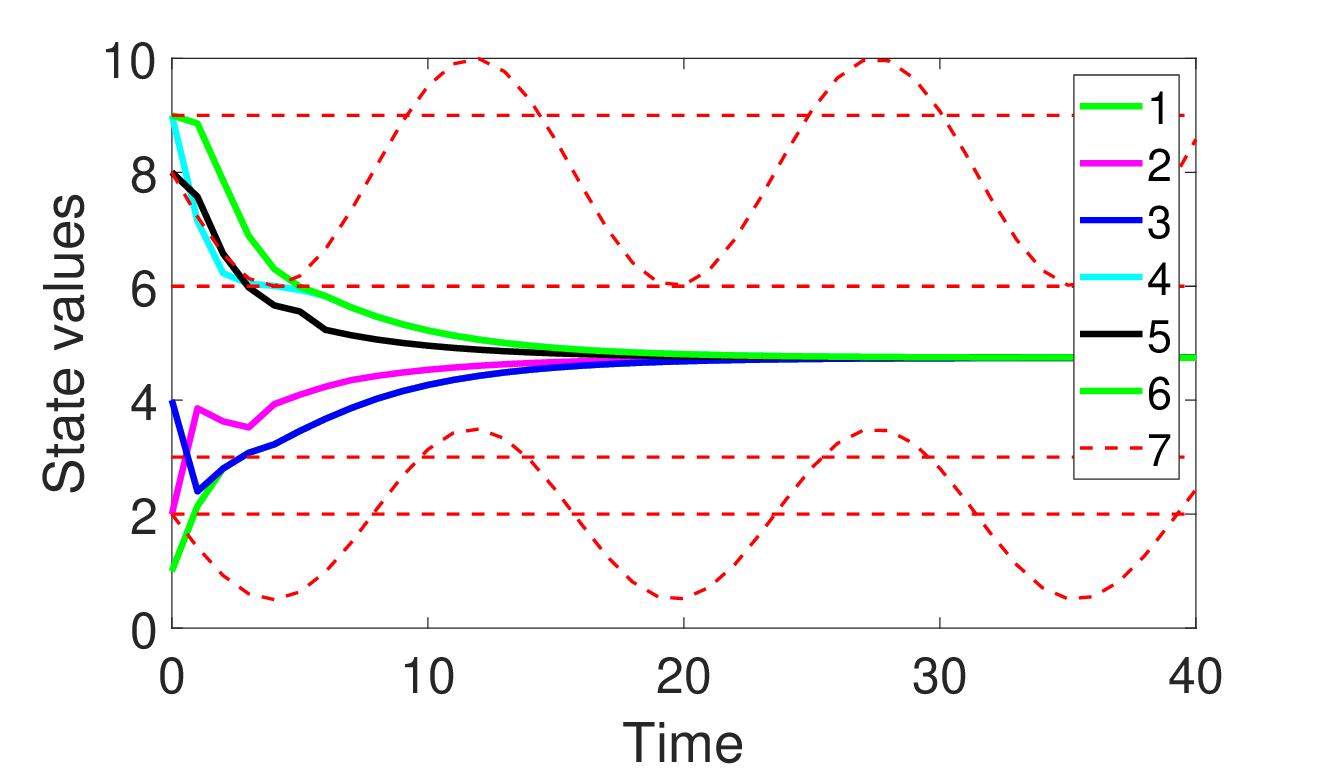}
%	}
%	
%	\vspace{-6pt}
%	\subfigure[\scriptsize{Consensus error.}]{
%		\includegraphics[width=3.2in,height=1.5in]{}
%	}
%	\vspace{-8pt}
%	\caption{Time responses of the synchronous two-hop MW-MSR algorithm.}
%	\label{two-hop-syn}
%\end{figure}
%
%\begin{figure}[t]
%	\centering
%	\subfigure[\scriptsize{The states of all nodes.}]{
%		\includegraphics[width=3.2in,height=1.5in]{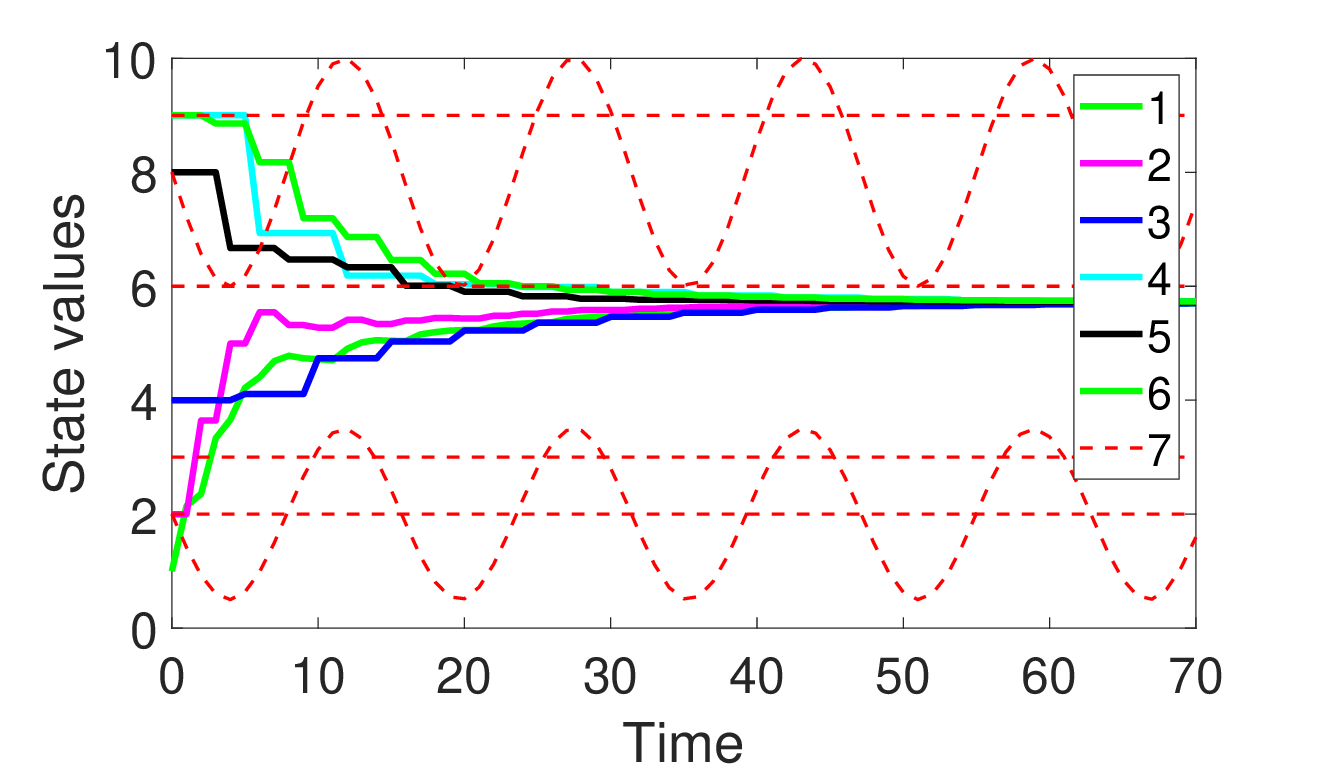}
%	}
%	
%	\vspace{-6pt}
%	\subfigure[\scriptsize{Consensus error.}]{
%		\includegraphics[width=3.2in,height=1.5in]{}
%	}
%	\vspace{-8pt}
%	\caption{Time responses of the asynchronous two-hop MW-MSR algorithm.}
%	\label{two-hop-asyn}
%\end{figure}

\begin{figure}[t]
	\centering
	\subfigure[\scriptsize{Synchronous updates.}]{
		\includegraphics[width=3.2in,height=1.5in]{}
	}
	
	\vspace{-6pt}
	\subfigure[\scriptsize{Asynchronous updates with delays.}]{
		\includegraphics[width=3.2in,height=1.5in]{}
	}
	\vspace{-8pt}
	\caption{Time responses of the two-hop MW-MSR algorithm in the 7-node network of Fig.~\ref{1lcoal}(a).}
	\label{two-hop-7}
\end{figure}

\subsection{Simulation in a Medium-sized Network: $f$-local versus $f$-total } \label{sec17node}

In this part, we perform further comparisons and show that our algorithm for the $f$-local model can tolerate more Byzantine agents than the flooding-based algorithm for the $f$-total model from \cite{sakavalas2020asynchronous}.
For this purpose, we apply our algorithm in the 17-node network in Fig.~\ref{1lcoal}(b).
As mentioned in Section~\ref{sec_strict_robustness}, this graph is not $2$-strictly robust with one hop, but is $2$-strictly robust with 2 hops under the 1-local model.

Assume that nodes 1 and 15 are Byzantine. Node 15 transmits four distinct values to its neighbors while node 1 maintains a constant value (indicated in red dashed lines in Fig.~\ref{two-hop-17}). Let the initial states of the normal agents fall within the range of $(0,40)$.
%$x^N[0]=[1\ 10\ 15\ 6\ 4\ 16\ 2\ 6\ 17\ 3\ 2\ 18\ 7\ 19\ 9]^T$. 
According to the results in \cite{vaidya2012iterative,leblanc2013resilient}, this graph fails to satisfy the criteria for either the $1$-local or the $1$-total Byzantine model even for synchronous updates. Consequently, in Fig.~\ref{two-hop-17}(a), Byzantine consensus is not achieved by the one-hop MW-MSR algorithm, which is equivalent to the algorithms in \cite{vaidya2012iterative,leblanc2013resilient}.

\begin{figure}[t]
	\centering
	\includegraphics[width=3.2in,height=1.5in]{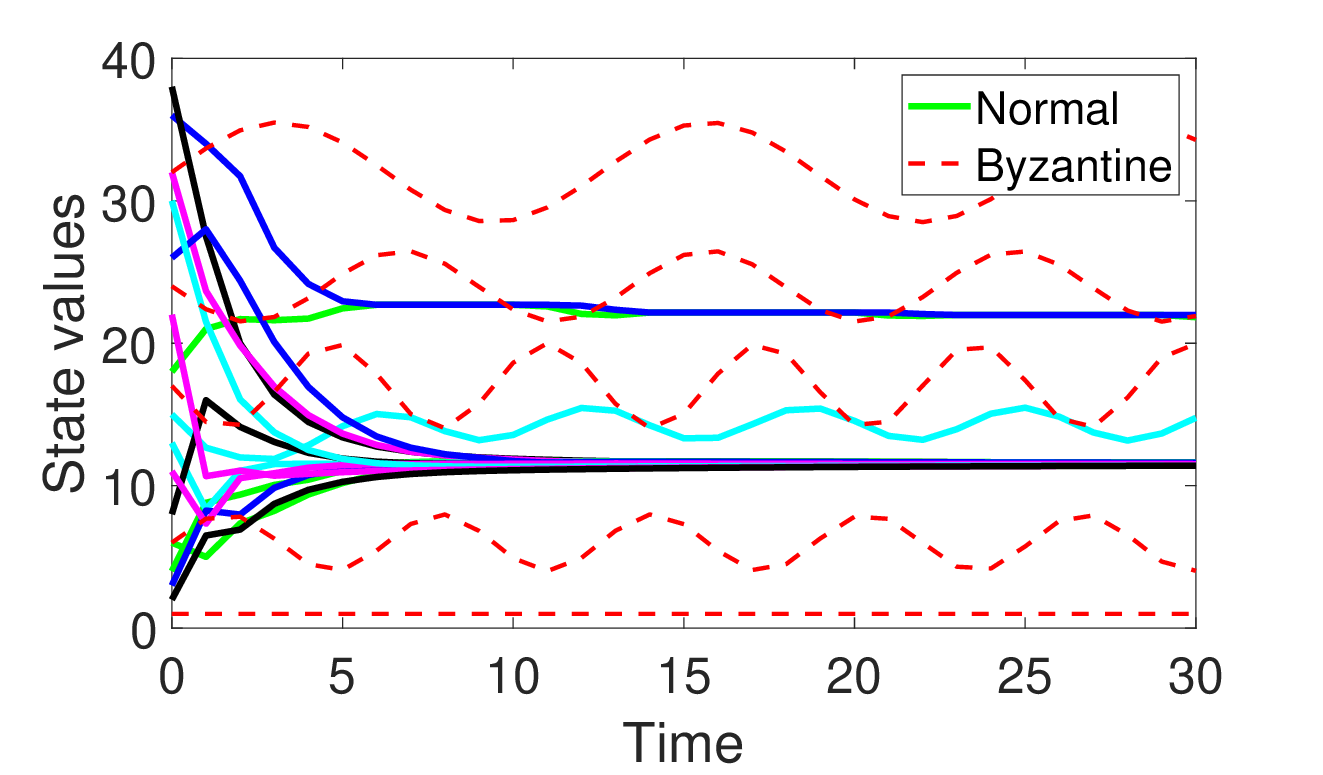}
	\vspace{-8pt}
	\caption{Time responses of the synchronous one-hop W-MSR algorithm in the 17-node network of Fig.~\ref{1lcoal}(b).}
	\label{one-hop-17}
\end{figure}

Then, we perform simulations for the synchronous and asynchronous two-hop MW-MSR algorithm under the same attacks, respectively. 
The results for the synchronous algorithm are given in Fig. \ref{two-hop-17}(b) and Byzantine consensus is achieved.
Next, let the normal nodes update asynchronously with delays in communication. 
Observe that Byzantine consensus is also achieved as shown in Fig. \ref{two-hop-17}(c), although the final stage of consensus takes longer due to the communication delays. These simulations clearly verifies the effectiveness of the proposed algorithm.

As a comparison, the flooding-based algorithm (\cite{sakavalas2020asynchronous}) for the $f$-total model cannot solve the Byzantine consensus under the same attack scenario. The reason is that for their algorithm to tolerate two Byzantine agents, the minimum in-degree of the graph needs to be at least $2f+1=5$, which is apparently not satisfied in the 17-node network.
Actually, our algorithm can achieve Byzantine consensus even in larger networks with more Byzantine nodes.
As discussed in the beginning of Section~\ref{sec_advantages}, if the locations of Byzantine nodes are well spread in the network, then the total tolerable number of Byzantine nodes can be very large.
This is because the erroneous influences from Byzantine nodes are also bounded by the relay range.
However, this situation clearly exceeds the capability of the flooding-based algorithm in \cite{sakavalas2020asynchronous}, where a Byzantine node can have erroneous influences on all the nodes in the network.

\begin{figure}[t]
	\centering
	\subfigure[\scriptsize{Synchronous updates.}]{
		\includegraphics[width=3.2in,height=1.5in]{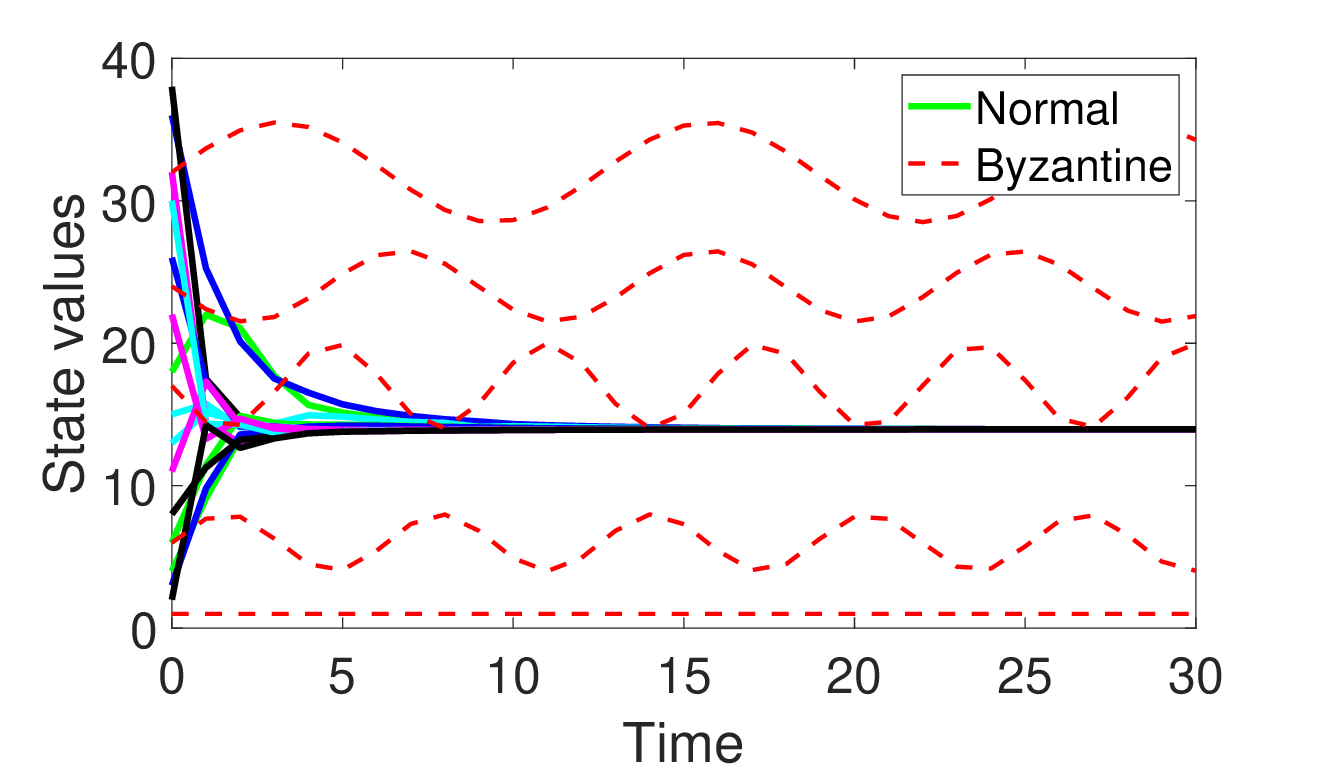}
	}
	
	\vspace{-6pt}
	\subfigure[\scriptsize{Asynchronous updates with delays.}]{
		\includegraphics[width=3.2in,height=1.5in]{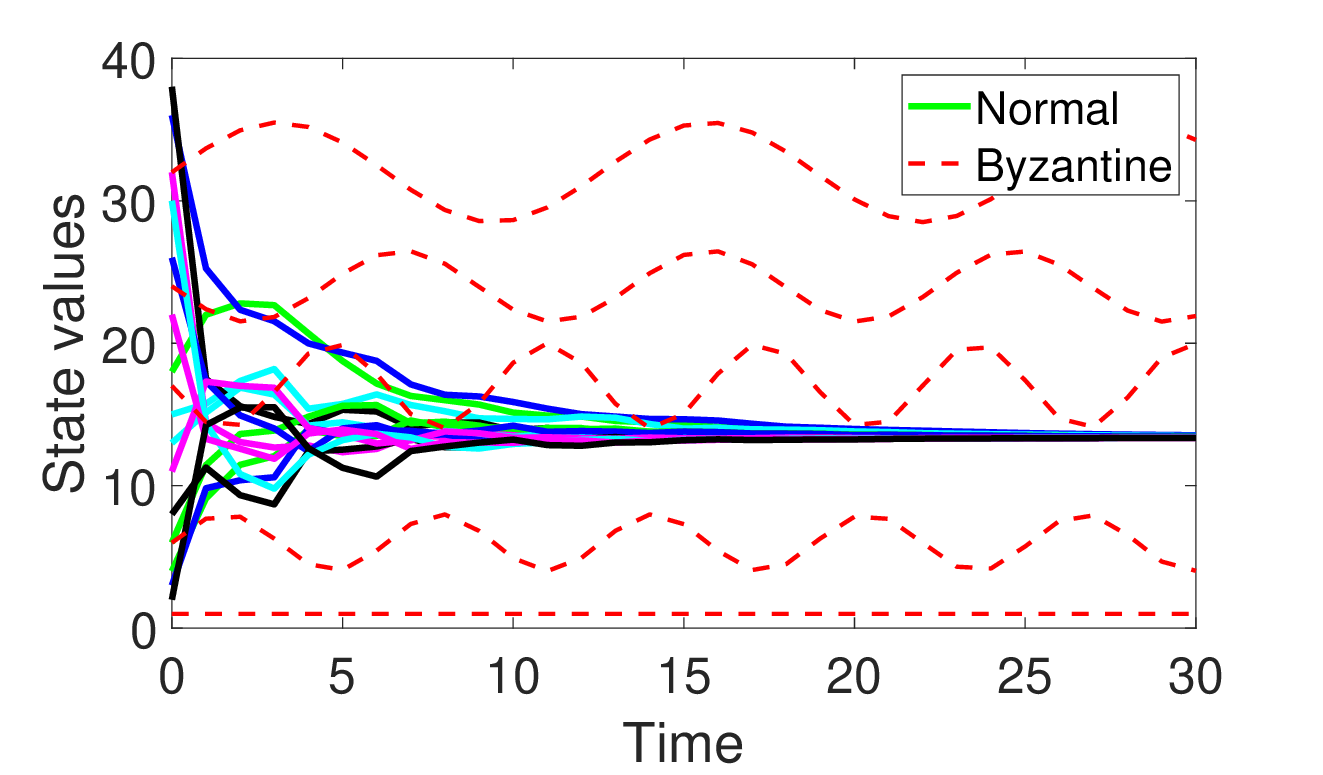}
	}
	\vspace{-8pt}
	\caption{Time responses of the two-hop MW-MSR algorithm in the 17-node network of Fig.~\ref{1lcoal}(b).}
	\label{two-hop-17}
\end{figure}

\section{Conclusion}
We have solved the approximate Byzantine consensus problem under asynchronous updates with time delays in the agents' communication. Our approach is based on the multi-hop weighted MSR algorithm. We have specifically provided a tight necessary and sufficient graph condition for the network using the MW-MSR algorithm for Byzantine consensus.
It is expressed using the notion of $r$-strictly robust graphs with $l$ hops. An important implication of our results is that under the $f$-total/local Byzantine model, the graph condition remains the same even if the algorithm becomes asynchronous and the communication is subject to time delays. 
Our analysis has led us to tighter robust graph conditions for the case of the malicious model than those known in the literature as well. 
Moreover, our algorithm is iterative and requires only local information and topology for each node, and hence it is more light-weighted and distributed compared to the conventional flooding-based algorithms.

\fontsize{8pt}{8.5pt}\selectfont
%\tiny
%\scriptsize
%\bibliographystyle{dcu}           % Include this if you use bibtex 
%\bibliography{references}           % and a bib file to produce the 
                                    % bibliography (preferred). The
                                    % correct style is generated by
                                    % Elsevier at the time of printing.

\end{document}